\newcounter{theorem}
\newtheorem{Theorem}{Theorem}[section]
\newtheorem{lemma}[theorem]{Lemma}
\theoremstyle{definition} 
\newtheorem{definition}[theorem]{Definition}
\begin{document}

\centerline{\Large \bf Geometric models for Lie--Hamilton systems on $\mathbb{R}^2$}

\vskip 0.20cm

\centerline{J. Lange and J. de Lucas}

\vskip 0.20cm

\centerline{Department of Mathematical Methods in Physics, University of Warsaw,}
\centerline{ul. Pasteura 5, 02-093, Warsaw, Poland}

\begin{abstract}
This paper provides a geometric description for  Lie--Hamilton systems on $\mathbb{R}^2$ with locally transitive Vessiot--Guldberg Lie algebras through two types of geometric models. The first one is the restriction  of a class of Lie--Hamilton systems on the dual of a Lie algebra to even-dimensional symplectic leaves relative to the Kirillov-Kostant-Souriau bracket. The second is a projection onto a quotient space of an automorphic Lie--Hamilton system relative to a naturally defined Poisson structure or, more generally, an automorphic Lie system with a compatible bivector field. These models give rise to a natural framework for the analysis of Lie--Hamilton systems on $\mathbb{R}^2$ while retrieving known results in a natural manner. Our methods may be extended to study Lie--Hamilton systems on higher-dimensional manifolds  and provide new approaches to Lie systems admitting compatible geometric structures.
\end{abstract}

\noindent
{\it Keywords: }Lie system; superposition rule; Lie--Hamilton system; integral system; symplectic geometry; Vessiot--Guldberg Lie algebra. \\
\noindent
{\it MSC 2010:} 34A26, 34A05 (primary); 34A34, 17B66, 22E70 (secondary)


	\section{Introduction}
	A {\it Lie system} is a first-order system of ordinary differential equations (ODEs)
	whose general solution can be written as a function, a so-called
	{\it superposition rule}, of a generic family of particular solutions and
	some constants to be related to initial conditions \cite{CGM00,CGM07,Dissertationes,LS,Ve93,Ve99,PW}. 
	Lie systems can be characterised by the  Lie--Scheffers theorem \cite{CGM00,CGM07,Dissertationes,LS,LS19,PW}, which states that a Lie system is equivalent to a $t$-dependent vector field taking values in a finite-dimensional Lie algebra
	of vector fields, called a {\it Vessiot--Guldberg (VG) Lie algebra} of the Lie system \cite{Dissertationes,LS19}.
	
	Every Lie system on $\mathbb{R}^2$ can be endowed with a VG Lie algebra that is, around a generic
	point, locally diffeomorphic to one of the twenty-eight possible classes of VG Lie algebras  on $\mathbb{R}^2$ described in \cite{GKO92,LS,LS19}. In particular, only twelve classes can be considered, again locally around a generic point, as VG Lie algebras of Hamiltonian vector fields relative to a symplectic form (see \cite{BHLS15,LS19} and Table \ref{table1}). Lie systems admitting a VG Lie algebra of Hamiltonian vector fields relative to a Poisson bivector are called {\it Lie--Hamilton systems} \cite{BHLS15,CLS13,LS19}. In the case of Lie--Hamilton systems on $\mathbb{R}^2$, the role played by the Poisson bivector is accomplished, at a generic point on the plane, by a symplectic form \cite{BBHLS15}.  
	
	Although Lie--Hamilton systems on $\mathbb{R}^2$ are the exception rather than the rule among general differential equations (cf.\! \cite{BHLS15,LS19}), they admit a plethora of geometric properties and relevant applications \cite{BCFHL17,BCFHL18,BCHLS13,BHLS15,LS19}, which motivates their analysis. For instance, Smorodinsky--Winternitz oscillators \cite{FMSUW65} or certain diffusion equations can be analysed via Lie--Hamilton systems on $\mathbb{R}^2$ (see \cite{BCHLS13,BHLS15,LS19} and references therein).
	
	    Among Lie systems, a relevant role is played by  {\it automorphic Lie systems}, i.e. a class of Lie systems described by a $t$-dependent vector field on a Lie group $G$ taking values in its Lie algebra of right-invariant vector fields (or in its Lie algebra of left-invariant vector fields either) \cite{CGM00,Dissertationes,LS19}. All Lie systems can be described via automorphic Lie systems (see \cite{CGM00,Dissertationes,LS19,SW84,SW84II}). Nevertheless, there exists no prototypical geometric model to describe Lie--Hamilton systems. Moreover, many works on Lie--Hamilton systems try to derive or to explain the existence of a Poisson bivector or symplectic form turning the elements of a VG Lie algebra into Hamiltonian vector fields \cite{BHLS15,LL18}. This has been done by solving systems of partial differential equations (PDEs) \cite{BBHLS15} or using other algebraic and geometric techniques \cite{BHLS15,BCFHL18,LL18}. 
	
	The aim of this work is to provide prototypical geometric models for Lie--Hamilton systems on $\mathbb{R}^2$ admitting {\it locally transitive}  VG Lie algebras, i.e.  Lie algebras of vector fields whose elements span, at every point, the tangent space to the manifold where they are defined on.  More specifically, we focus on the classes P$_1$, P$_2$, P$_3$, P$_5$, I$_4$, I$_5$, I$_8$, I$_{14A}$, I$_{14B}$, and I$_{16}$ given in Table \ref{table1}. We use two types of geometric models for studying Lie--Hamilton systems on $\mathbb{R}^2$, and we propose methods to derive geometrically their associated symplectic structures.
	
	Our first model describes Lie--Hamilton systems on $\mathbb{R}^2$ as the restriction of a Lie--Hamilton system on the dual, $\mathfrak{g}^*$, of a Lie algebra $\mathfrak{g}$  to an even-dimensional symplectic leaf on $\mathfrak{g}^*$ relative to the Poisson structure given by the  {\it Kirillov-Kostant-Souriau (KKS) bracket} \cite{Va94}. Our methods significantly extend and clarify the procedure briefly sketched in \cite[Theorem 2]{BCFHL18}, which was focused on studying only a particular simple example and it did not consider several difficulties for the application of the technique. Moreover, we now obtain as a byproduct that certain  Lie--Hamilton systems on $\mathbb{R}^2$ are endowed with a pseudo--Riemannian metric that is invariant relative to the Lie derivative with respect to the vector fields of the VG Lie algebra of the Lie--Hamilton system. This explains previous results relative to the existence of Lie systems on $\mathbb{R}^2$ admitting a VG Lie algebra of Killing vector fields with respect to a pseudo-Riemannian metric  \cite{LL18}. 
	
	Our second model is a projection onto a quotient space of an automorphic Lie system  whose VG Lie algebra leaves invariant a Poisson bracket, or a  general bivector field that it is also projectable. This model extends the methods in \cite{GLMV19}, where invariant tensor fields relative to VG Lie algebras for Lie systems were constructed via $\mathfrak{g}$-invariant elements of tensor algebras on the elements of a Lie algebra. 

	It is remarkable that all presented geometric models can be immediately extended to describe Lie--Hamilton systems on higher-dimensional manifolds. Moreover, our results also provide additional information on the existence of symplectic leaves for KKS brackets related to general Lie algebras, e.g. the existence of two-dimensional symplectic leaves relative to the KKS bracket on $\mathfrak{sl}^*_3$. Finally, our techniques give rise to the conjecture that the compatible structures for Lie systems can be obtained very generally via the projections onto quotient spaces of automorphic Lie systems admitting VG Lie algebras of right-invariant vector fields and associated left-invariant tensor fields. 
	
	\begin{table}[t] {\footnotesize
 \noindent
\caption{{\small Classes of VG Lie algebras of Hamiltonian vector fields on $\mathbb{R}^2$ (see \cite{BHLS15,GKO92,HA75}). In particular, every Lie algebra consists of Hamiltonian vector fields on the submanifold of $\mathbb{R}^2$ where the given symplectic form is well defined. The functions $\xi_1(x),\ldots,\xi_r(x)$ and $1$ are any set of linearly independent functions, while  $\eta_1(x),\ldots,\eta_r(x)$ form a basis of fundamental solutions for an $r$-order homogeneous differential equation with constant coefficients \cite[pp.~470--471]{HA75}. Finally, $\mathfrak{g}_1\ltimes \mathfrak{g}_2$ stands for the semi-direct product  of the Lie algebras $\mathfrak{g}_1$ and $\mathfrak{g}_2$, where $\mathfrak{g}_2$ becomes an ideal of $\mathfrak{g}_1\ltimes \mathfrak{g}_2$. We assume that $i=1,\ldots,r$ and $r\geq 1$. Locally transitive VG Lie algebras are written in bold.}}
\label{table1}
\medskip
\noindent\hfill
 \begin{tabular}{ p{1.4cm} p{3cm}    p{7.7cm} p{1.7cm} l}
\hline
&  &\\[-1.9ex]
Primitive&Lie algebra structure & Basis of vector fields $X_i$ & $\omega$ \\[+1.0ex]
\hline
 &  &\\[-1.9ex]
{\bf P$_1$}&$\mathfrak{iso}_2\simeq \mathbb{R}\ltimes \mathbb{R}^2$ & $  { {\partial_x} ,    {\partial_y} ,      y\partial_x - x\partial_y},$ &$dx\wedge dy$\\[+1.0ex]
{\bf P$_2$}&$\mathfrak{sl}_2$ (type I) & $ {\partial_x},   {x\partial_x  +  y\partial_y} ,   (x^2  -  y^2)\partial_x  +  2xy\partial_y$&$\frac{dx\wedge dy}{y^2}$\\[+1.0ex]
{\bf P$_3$}&$\mathfrak{so}_3$ &${     { y\partial_x  \!-\!  x\partial _y},     { (1  +  x^2 \! - \! y^2)\partial_x \! +\!  2xy\partial_y} ,   2xy\partial_x  \!+ \! (1  +  y^2\!  - \! x^2)\partial_y}$&$\frac{dx\wedge dy}{1+x^2+y^2}$\\[+1.0ex]
{\bf P$_5$}&$\mathfrak{sl}_2\ltimes\mathbb{R}^2$ &${  {\partial_x},   {\partial_y},  x\partial_x - y\partial_y,  y\partial_x,  x\partial_y}$&$dx\wedge dy$\\[+1.0ex]
\hline
&  \\[-1.5ex]
Imprimitive & Lie algebra structure\!\! & Basis of vector fields $X_i$ & $\omega$ \\[+1.0ex]
\hline
&  \\[-1.5ex]
I$_1$&$\mathbb{R}$ &$  {\partial_x} $&$f(y)dx\wedge dy$ \\[+1.0ex]
{\bf I$_4$}&$\mathfrak{sl}_2$ (type II) & ${ {\partial_x  +  \partial_y},    {x\partial _x + y\partial_y},   x^2\partial_x  +  y^2\partial_y}$ &$\frac{dx\wedge dy}{(x-y)^2}$\\[+1.0ex]
{\bf I$_5$}&$\mathfrak{sl}_2$ (type III) &${ {\partial_x},    {2x\partial_x + y\partial_y},   x ^2\partial_x  +  xy\partial_y}$&$\frac{dx\wedge dy}{y^3}$\\[+1.0ex]
{\bf I$_8$}&$\mathfrak{iso}_{1,1}\simeq \mathbb{R}\ltimes\mathbb{R}^2$ &${  {\partial_x},    {\partial_y},   x\partial_x  - y\partial_y}\quad  $&$dx\wedge dy$\\[+1.0ex]
I$_{12}$&$\mathbb{R}^{r + 1}$ &$ {\partial_y} ,   \xi_1(x)\partial_y, \ldots , \xi_r(x)\partial_y$&$f(x)dx\wedge dy$\\[+1.0ex]
{\bf I$_{14A}$}&$\mathbb{R}\ltimes \mathbb{R}^{r}$ & ${ {\partial_x},   {\eta_1(x)\partial_y} ,  {\eta_2(x)\partial_y},\ldots ,\eta_r(x)\partial_y},\quad 1\neq \eta_i(x)$&$dx\wedge dy$\\[+1.0ex]
{\bf I$_{14B}$}&$\mathbb{R}\ltimes \mathbb{R}^{r}$ & ${ {\partial_x},   {\eta_1(x)\partial_y} ,  {\eta_2(x)\partial_y},\ldots ,\eta_r(x)\partial_y},\quad 1=\eta_1(x)$&$dx\wedge dy$\\[+1.0ex]
{\bf I$_{16}$}&$C_{-1}^r\simeq \mathfrak{h}_2\ltimes\mathbb{R}^{r + 1}$ & ${  {\partial_x},    {\partial_y} ,   x\partial_x  -y\partial_y,   x\partial_y, \ldots, x^r\partial_y},\quad    $&$dx\wedge dy$\\[+1.0ex]
\hline
 \end{tabular}
\hfill}
\end{table}

	\section{Fundamentals}
	
	Let us provide a brief account of the theory of Lie--Hamilton systems needed to understand our work and to make our exposition almost self-contained. To highlight our main
ideas and to avoid minor technical details, we assume, if not otherwise stated, that mathematical objects are smooth and well defined globally (see \cite{AM87,CGM00,Dissertationes,LS19,Va94} for further details). Hereafter, $N$ stands for a connected  $n$-dimensional manifold, and $\mathfrak{g}$ denotes an abstract finite-dimensional Lie algebra.
	
	We call a $t$-dependent vector field on $N$ a map $X:(t,x)\in \mathbb{R}\times N\mapsto X(t,x)\in TN$ satisfying that $\tau_N\circ X=\pi_2$, where $\tau_N:TN\rightarrow N$ is the canonical tangent bundle projection and we define $\pi_2:(t,x)\in \mathbb{R}\times N\mapsto x\in N$. An {\it integral curve} of $X$ is a solution $\gamma:\mathbb{R}\rightarrow N$ to
	\begin{equation}\label{Asso}
	\frac{d\gamma}{dt}(t)=X(t,\gamma(t)),\qquad \forall t\in \mathbb{R}.
	\end{equation}
	Then,  $\widetilde{\gamma}:t\in \mathbb{R}\mapsto (t,\gamma(t))\in \mathbb{R}\times N$ becomes an integral curve of the {\it autonomisation} (also called {\it suspension}) of $X$, i.e. the vector field on $\mathbb{R}\times N$ of the form $\widetilde{X}:=\partial_t+X$ \cite{AM87,Dissertationes}. Conversely, a section $\widetilde{\gamma}:\mathbb{R}\rightarrow \mathbb{R}\times N$ of the bundle $\pi_1:(t,x)\in \mathbb{R}\times N\mapsto t\in \mathbb{R}$ that is additionally an integral curve of $\widetilde{X}$ leads to a solution $\pi_2\circ \widetilde{\gamma}$ to (\ref{Asso}). This one-to-one correspondence enables us to identify  system  (\ref{Asso}) with its associated $t$-dependent vector field $X$. This also allows us to shorten the terminology of the paper.
	
	Each $t$-dependent vector field $X$ on $N$ amounts to a $t$-parametric family of standard vector fields on $N$ of the form $\{X_t:x\in N\mapsto X(t,x)\in TN\}_{t\in \mathbb{R}}$. We call {\it smallest Lie algebra} of $X$ (also called {\it minimal Lie algebra} or {\it irreducible Lie algebra} in the literature \cite{LS19}) the smallest Lie algebra (in the sense of inclusion) of vector fields, $V^X$, containing $\{X_t\}_{t\in \mathbb{R}}$. Let us denote by $\mathcal{D}^{V}$ the generalised distribution on $N$ spanned by the elements of a Lie algebra of vector fields $V$. Then, $\mathcal{D}^{V}$ is regular in the connected components of an open and dense subset of $N$ (see \cite{Pa57,Va94} for details). Special attention will be hereafter paid to finite-dimensional Lie algebras of vector fields, the so-called {\it Vessiot--Guldberg (VG) Lie algebras}.

	A {\it superposition rule} \cite{CGM07,Dissertationes,LS19,PW} for a system $X$ on  $N$ is a map $\Psi:N^m\times N\rightarrow N$ satisfying that the general solution, $x(t)$, to $X$ can be expressed as
	$$
	x(t)=\Psi(x_{(1)}(t),\ldots, x_{(m)}(t),k),
	$$
	for a fixed family of particular solutions $x_{(1)}(t),\ldots,x_{(m)}(t)$ to $X$ and a parameter $k\in N$ to be related to the initial conditions of $x(t)$. We call {\it Lie system} a system of ODEs admitting a superposition rule. Although the term superposition rule has been used in the literature with different meanings and there exist different approaches to each notion, our definition in this paper is the predominant in the literature on nonlinear differential equations (cf. \cite{CGM00,Dissertationes,LS19,Ma01,Um89} and references therein).
	
	\begin{Theorem}{ (The Lie--Scheffers theorem \cite{CGM00,CGM07,Dissertationes, LS,PW})} A system $X$ on $N$ admits a
		superposition rule if and only if 
		$
		X=\sum_{\alpha=1}^rb_\alpha(t)X_\alpha
		$
		for a set $X_1,\ldots,X_r$ of vector fields on $N$ generating an $r$-dimensional Lie algebra of vector fields, called a {\it VG Lie algebra} of $X$, and
		a family $b_1(t),\ldots,b_r(t)$  of $t$-dependent functions.
	\end{Theorem}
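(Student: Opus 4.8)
The plan is to prove both implications through the unifying device of \emph{diagonal prolongations} to Cartesian powers of $N$. For a vector field $Y$ on $N$ I write $\widetilde{Y}$ for its diagonal prolongation to $N^{m+1}$, i.e. the vector field acting as a separate copy of $Y$ on each factor. The two facts I shall lean on throughout are that diagonal prolongation is a Lie-algebra morphism, $\widetilde{[Y,Z]}=[\widetilde{Y},\widetilde{Z}]$, and that an integral curve of the autonomisation is, factor by factor, a solution of the original system; hence a tuple $(x_{(0)}(t),\ldots,x_{(m)}(t))$ of solutions of $X$ is precisely an integral curve of $\partial_t+\sum_\alpha b_\alpha(t)\widetilde{X}_\alpha$ on $N^{m+1}$.

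For the sufficiency ($\Leftarrow$) I would argue constructively. Assume $X=\sum_{\alpha=1}^r b_\alpha(t)X_\alpha$ with the $X_\alpha$ spanning an $r$-dimensional Lie algebra $V$. The prolongations $\widetilde{X}_1,\ldots,\widetilde{X}_r$ span an involutive generalised distribution $\mathcal{D}$ on $N^{m+1}$ of generic rank at most $r$. Taking $m$ large enough that $mn\geq r$ and that the prolongations are generically independent, the number of functionally independent common first integrals of $\mathcal{D}$ equals $(m+1)n-\operatorname{rank}\mathcal{D}\geq n$; I would select $n$ of them, $F=(F_1,\ldots,F_n)$, transverse to one distinguished factor, so that $\partial F/\partial x_{(0)}$ is invertible at a generic point. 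The implicit-function theorem then solves $F(x_{(0)},x_{(1)},\ldots,x_{(m)})=k$ for $x_{(0)}=\Psi(x_{(1)},\ldots,x_{(m)},k)$. Since $\widetilde{X}_\alpha F=0$ for every $\alpha$, the functions $F$ are constant along any tuple of solutions of $X$, which is exactly the statement that $\Psi$ is a superposition rule.

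For the necessity ($\Rightarrow$) I would run this construction in reverse and then extract the Lie algebra. From a superposition rule $\Psi$, with $\Psi(x_{(1)},\ldots,x_{(m)},\cdot)$ a local diffeomorphism, I invert to obtain $n$ functions $F=(F_1,\ldots,F_n)$ on a dense open subset of $N^{m+1}$ returning the constant $k$. Because every point arises as an initial condition of some solution, the relation $x_{(0)}(t)=\Psi(\ldots,k)$ forces $F$ to be constant along all tuples of solutions, whence $\widetilde{X}_t F=0$ for every $t$. The decisive step is to consider the set $\mathcal{V}$ of all vector fields $Y$ on $N$ with $\widetilde{Y}F=0$: by the morphism property $\mathcal{V}$ is a Lie algebra, and evaluating the defining equations at $m$ generic points $p_1,\ldots,p_m$ of the input factors while letting the distinguished factor vary shows, using the invertibility of $\partial F/\partial x_{(0)}$, that each $Y\in\mathcal{V}$ is determined by the $mn$ numbers $Y(p_1),\ldots,Y(p_m)$. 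Hence $\dim\mathcal{V}\leq mn$ and $\{X_t\}_{t}\subseteq\mathcal{V}$, so the smallest Lie algebra $V^X$ is finite-dimensional; expanding $X_t$ in a basis $X_1,\ldots,X_r$ of $\mathcal{V}$ yields $X=\sum_\alpha b_\alpha(t)X_\alpha$.

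The step I expect to be most delicate is the necessity direction, and within it the passage from ``the inverted superposition rule is constant along the given family of solutions'' to ``$\widetilde{X}_t F=0$ identically and for every $t$''. This requires controlling the genericity and density hypotheses (where the distribution is regular, where $\Psi$ inverts, where solutions actually reach) so that the pointwise annihilation of $F$ propagates to the open dense domain underlying $V^X$, and one must also verify the regularity of the resulting coefficients $b_\alpha(t)$. By contrast, the finite-dimensionality of $\mathcal{V}$ and its closure under brackets are formal consequences of the prolongation morphism property and the transversality of $F$, and the counting of first integrals in the sufficiency direction is routine once $m$ is taken large enough.
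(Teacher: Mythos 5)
The paper does not prove this theorem at all: it is quoted as a known result, with the proof deferred to the cited literature (Cari\~nena--Grabowski--Marmo, the Dissertationes survey, Lie--Scheffers, Winternitz). So there is no internal proof to compare against; your proposal can only be measured against the standard argument in those references --- and it reproduces that argument faithfully. Both directions of your sketch (common first integrals of the diagonally prolonged distribution giving $\Psi$ via the implicit function theorem; inversion of $\Psi$ giving $F$, then the Lie algebra $\mathcal{V}=\{Y:\widetilde{Y}F=0\}$ with the evaluation map $Y\mapsto(Y(p_1),\ldots,Y(p_m))$ injective, hence $\dim\mathcal{V}\leq mn$) are exactly the mechanism of the modern geometric proof. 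Two remarks on completeness rather than correctness. First, the step you invoke as ``taking $m$ large enough that the prolongations are generically independent'' is itself a nontrivial lemma (linear independence of $X_1,\ldots,X_r$ over $\mathbb{R}$ implies pointwise linear independence of $\widetilde{X}_1,\ldots,\widetilde{X}_r$ at generic points of $N^m$ for $m$ sufficiently large); it is needed both for your rank count and for the transversality condition that lets $\partial F/\partial x_{(0)}$ be invertible, so it deserves explicit statement and proof in a complete write-up. Second, your own flagging of the delicate points in the necessity direction (propagating $\widetilde{X}_tF=0$ from tuples of solutions to an open dense set, and the regularity of the coefficients $b_\alpha(t)$) is apt: these are exactly the places where the published proofs spend their care, and they resolve as you expect --- every point of a dense open set is an initial condition of some tuple of solutions, $N$ is connected, and the $b_\alpha(t)$ are recovered by solving a linear system at finitely many generic points, so they inherit the $t$-regularity of $X$.
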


		A prototypical type of Lie system (see \cite{CGM00,Dissertationes,LS19}) is given by the system of differential equations on a Lie group $G$ associated with a $t$-dependent vector field 
	\begin{equation}\label{Aut}
	X^G(t,g):=\sum_{\alpha=1}^rb_\alpha(t)X^R_\alpha(g),\qquad \forall g\in G,
	\end{equation}
	where $X^R_1,\ldots,X_r^R$ stand for a basis of right-invariant vector fields
	on $G$, and $b_1(t),\ldots,b_r(t)$ are arbitrary $t$-dependent functions. Indeed, $X^R_1,\ldots,X^R_r$ span an $r$-dimensional VG Lie algebra on $G$ \cite{AM87}. The Lie--Scheffers theorem yields then that (\ref{Aut}) admits a superposition rule, which turns $X^G$ into a Lie system. In fact, the right-invariance of (\ref{Aut}) relative to the action of $G$ on itself from the right ensures that if $g_0(t)$ is any particular solution to (\ref{Aut}) and $h\in G$, then $g_0(t)h$ is another particular solution to (\ref{Aut}). Since initial conditions to (\ref{Aut}) are in one-to-one correspondence with its particular solutions, the general solution to (\ref{Aut}), let us say $g(t)$, can be written as
	$$
	g(t)=g_0(t)h,
	$$
	where $h$ is an arbitrary element of $G$. 
	Then, $X^R$ possesses a superposition rule $\Psi:(g,h)\in G\times G\mapsto gh\in G$. Lie systems taking the form (\ref{Aut}) are called {\it automorphic Lie
		systems} \cite{Dissertationes,GLMV19}. Every Lie system can be solved through one particular solution to a related automorphic Lie system \cite{CGM00,Dissertationes,LS19}. Automorphic Lie systems play a relevant role in determining superposition rules for special classes of Lie systems 
		\cite{SW84,SW84II}.
		
		Let us turn to studying Lie systems with compatible Poisson structures (see \cite{LS19} for details). 
		We recall that a {\it Poisson structure} on $N$ is a map $\{\cdot,\cdot\}:C^\infty(N)\times C^\infty(N)\rightarrow C^\infty(N)$ that is  antisymmetric, bilinear, and it satisfies the Leibniz property and the Jacobi identity. The pair $(N,\{\cdot,\cdot\})$ is called a {\it Poisson manifold}. We just say that $N$ is a Poisson manifold if $\{\cdot,\cdot\}$ is understood by context. Since $\{\cdot,\cdot\}$ is a derivation on each entry, it amounts to a bivector field $\Lambda$ on $N$ satisfying $[\Lambda,\Lambda]_{SN}=0$, where $[\cdot,\cdot]_{SN}$ stands for the {\it Schouten-Nijenhuis bracket}. Conversely, a bivector field $\Lambda$ on $N$ satisfying $[\Lambda,\Lambda]_{SN}=0$ leads to a Poisson structure $\{f,g\}:=\Lambda(df,dg)$ for every $f,g\in C^\infty(N)$. Hence, $\{\cdot,\cdot\}$ and its associated $\Lambda$ can be considered as equivalent.
		
		A vector field $X$ on a Poisson manifold $N$ is {\it Hamiltonian} if there exists a certain $\widetilde{f}\in C^\infty(N)$ such that $Xf=\{\widetilde{f},f\}$ for every $f\in C^\infty(N)$. A {\it Casimir function} of $\{\cdot,\cdot\}$ is a function $h\in C^\infty(N)$ satisfying that $\{h,f\}=0$ for every $f\in C^\infty(N)$.
		
		The use of Poisson structures in the study of Lie systems appeared very succinctly in \cite{CGM00}, but its usefulness, for instance so as to obtain superposition rules and constants of motion for Lie--Hamilton systems, was shown posteriorly in \cite{CLS13}. In particular, the use of Poisson geometry in the study of Lie systems is based on the definition below \cite{CLS13,LS19}.
		
	\begin{definition} A system $X$ on $N$ is called a {\it Lie--Hamilton system} if
$V^X$ is a VG Lie algebra of Hamiltonian vector fields relative to a Poisson bivector on $N$.
\end{definition}

	The work \cite{BBHLS15} showed that every VG Lie algebra $V$ of Hamiltonian vector fields on a two-dimensional Poisson manifold $N$ is, around each point $p\in N$ where $\mathcal{D}^V_p=T_pN$, locally diffeomorphic to a VG Lie algebra detailed in just one of the twelve classes of Table \ref{table1}. Note that I$_{12}$, I$_{14A}$, I$_{14B}$, and I$_{16}$
	are additionally subdivided into subclasses depending on the index $r$, which fixes the dimension of the VG Lie algebras of each subclass. 
	
	VG Lie algebras may be isomorphic as Lie algebras without being locally diffeomorphic, e.g. this concerns the VG Lie algebras of the classes P$_2$, I$_4$, and I$_5$  (see Table \ref{table1}). Meanwhile, VG Lie algebras belonging to one of the classes I$_{12}$, I$_{14A}$, I$_{14B}$, and I$_{16}$,  may not be isomorphic between themselves as Lie algebras. For instance, $\langle \partial_x,e^x\partial_y,e^{2x}\partial_ y\rangle$ and $\langle \partial_x,e^x\partial_y,xe^x\partial_y\rangle$ belong to I$_{14A}$ (for $r=2$), but they are not isomorphic as Lie algebras because the first one is isomorphic to $\mathbb{R}\ltimes_1\mathbb{R}^2$ while the second is isomorphic to $\mathbb{R}\ltimes_2\mathbb{R}^r$ for a different, not equivalent to $\ltimes_1$, semi-direct product $\ltimes_2$.

If a Poisson bivector $\Lambda$ satisfies that the map $\widehat{\Lambda}:\alpha\in T^*N\mapsto \Lambda(\alpha,\cdot)\in TN$ is an isomorphism, then $\Lambda(\widehat{\Lambda}^{-1}\cdot,\widehat{\Lambda}^{-1}\cdot)$ becomes a symplectic form on $N$.
Every Poisson manifold $(N,\Lambda)$ is such that $N$ can be decomposed into the sum of non-intersecting submanifolds (generally of  different dimension), a so-called {\it stratification} \cite{AM87}, in such a way that $\Lambda$ is tangent to each submanifold and its restriction induces a symplectic structure on it. Moreover, the tangent space to each submanifold of the stratification is spanned by the restriction to it of the Hamiltonian vector fields relative to $\Lambda$ (see \cite{Va94} for details). 

Let us now define a new structure for the study of Lie systems. Let $G$ be a Lie  group and let $\mathcal{T}^L(G)$ be the space of left-invariant contravariant tensor fields on $G$. It is clear that if $\{X^L_1,\ldots, X^L_r\}$ is a basis of left-invariant vector fields on $G$, then every element of $\mathcal{T}^L(G)$ can be considered as a linear combination of tensor products of $X^L_1,\ldots, X^L_r$. Moreover, if $V^L$ stands for the space of left-invariant vector fields on $G$, then one can define a Lie algebra morphism $\mu:X\in V^L\mapsto \mu_X:=\mathcal{L}_X\in {\rm End}(\mathcal{T}^L(G))$, where $\mathcal{L}_X$ denotes the Lie derivative relative to $X$. 

Let us briefly analyse the space of invariants, $$
\mathcal{T}^L_{\rm inv}(G):=\{T\in \mathcal{T}^L(G):\mu_X(T)=0,\forall X\in V^L\},
$$
relative to  $\mu$. Let us show how Casimir elements of a Lie algebra $\mathfrak{g}$ and the  $\mathfrak{g}$-invariant elements of the Grassmann algebra, $\bigwedge \mathfrak{g}$, of $\mathfrak{g}$ can be considered as elements in $\mathcal{T}^L(G)$ (see \cite{Va84} for details). A Lie algebra $\mathfrak{g}$ admits a so-called {\it universal enveloping algebra}, $U(\mathfrak{g})$, which is linearly isomorphic to the space of symmetric tensor products of elements of $\mathfrak{g}$. The adjoint representation ${\rm ad}:v\in \mathfrak{g}\mapsto {\rm ad}_v\in {\rm End}(\mathfrak{g})$,  can be extended to a Lie algebra morphism ${\rm ad}:\mathfrak{g}\rightarrow {\rm End}(U(\mathfrak{g}))$ by considering that the extension of each ${\rm ad}_v:\mathfrak{g}\rightarrow \mathfrak{g}$  with $v\in \mathfrak{g}$ to $U(\mathfrak{g})$, namely ${\rm ad}_v:U(\mathfrak{g})\rightarrow U(\mathfrak{g})$, is a derivation relative to the tensor product. Then, a {\it Casimir element} of $\mathfrak{g}$ is a $C\in U(\mathfrak{g})$ satisfying that ${\rm ad}_vC=0$ for every $v\in\mathfrak{g}$. Similarly, $\bigwedge \mathfrak{g}$ is linearly isomorphic to the space of skew-symmetric tensor products of elements of $\mathfrak{g}$. Likewise, ${\rm ad}:\mathfrak{g}\rightarrow {\rm End}(\mathfrak{g})$ can be extended to a new Lie algebra morphism ${\rm ad}:\mathfrak{g}\rightarrow {\rm End}(\bigwedge \mathfrak{g})$ by extending each ${\rm ad}_v:\mathfrak{g}\rightarrow \mathfrak{g}$, with $v\in \mathfrak{g}$, to a derivation in $\Lambda\mathfrak{g}$ relative to the tensor product. A {\it $\mathfrak{g}$-invariant element} is an $L\in \bigwedge \mathfrak{g}$ such that ${\rm ad}_vL=0$ for every $v\in\mathfrak{g}$.

Since every abstract Lie algebra $\mathfrak{g}$ is isomorphic to the Lie algebra of left-invariant vector fields of a connected and simply connected Lie group $G$, Casimir elements and $\mathfrak{g}$-invariant elements amount to elements of $\mathcal{T}^L_{\rm inv}(G)$. A similar construction to the above one can be accomplished by considering right-invariant objects on $G$. We denote then by $\mathcal{T}^R_{\rm inv}(G)$ the space of right-invariant tensor fields on $G$. 

Our previous construction is similar to the tensor algebra structure given in \cite{GLMV19}. In this paper, we will use such an approach to study Lie--Hamilton systems on the plane or higher-dimensional manifolds.

	There exist other kinds of Lie systems admitting compatible geometric structures, e.g. a Lie system $X$ on $N$ is called a {\it pseudo-Riemannian Lie system}  when it possesses a VG Lie algebra of Killing vectors relative to a pseudo-Riemannian metric on $N$ (see \cite{LS19} and references therein).
	
	Finally, we detail a lemma that, although being rather immediate, it seems to be absent in the literature. It will be useful to study the description of Lie--Hamilton systems through automorphic Lie systems with a compatible geometric structure.
	
	\begin{lemma}
	\label{lemma3}
	A $k$-vector field $\mathfrak{V}$ on a Lie group $G$, i.e. an antisymmetric $k$-contravariant tensor field on $G$, can be projected onto the quotient space $G/H$ of left cosets of a connected Lie subgroup $H\subset G$ if and only if $\pi_*\mathcal{L}_{X_H^L}\mathfrak{V}=0$, where $\pi:G\rightarrow G/H$ is the canonical projection onto the quotient space, for every left-invariant vector field $X_H^L$ tangent to $H$. 
	\end{lemma}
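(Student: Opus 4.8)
The plan is to read \emph{projectability} of $\mathfrak{V}$ as the statement that the fibrewise pushforward $\pi_*\colon\bigwedge^kT_gG\to\bigwedge^kT_{\pi(g)}(G/H)$, applied to $\mathfrak{V}_g$, is constant along the fibres of $\pi$ and assembles into a smooth $k$-vector field $\bar{\mathfrak{V}}$ on $G/H$ with $\pi_*\mathfrak{V}_g=\bar{\mathfrak{V}}_{\pi(g)}$. First I would record the two structural facts that single out the left-invariant fields as the correct objects to test against: the fibre of $\pi$ through $g$ is the left coset $gH$, which is connected because $H$ is; and the vertical distribution $\ker\pi_*$ is spanned at every point by the left-invariant vector fields $X_H^L$ taking values in $\mathfrak{h}$, the Lie algebra of $H$, since $X_\xi^L(g)=(L_g)_*\xi$ is tangent to $gH$ precisely when $\xi\in\mathfrak{h}$. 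Crucially, the flow of such an $X_\xi^L$ is the right translation $\phi_t(g)=g\exp(t\xi)$, which preserves each fibre, so that $\pi\circ\phi_t=\pi$.

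The heart of the argument is a single differentiation identity. Fix $\xi\in\mathfrak{h}$, write $Y=X_\xi^L$ with flow $\phi_t$, and consider for fixed $g$ the curve $F(t):=\pi_*\big(\mathfrak{V}_{\phi_t(g)}\big)$. Because $\pi\circ\phi_t=\pi$, every $F(t)$ lies in the one fixed vector space $\bigwedge^kT_{\pi(g)}(G/H)$, so it may be differentiated; and the relation $\pi_*\circ(\phi_t)_*=\pi_*$ (obtained by pushing forward $\pi\circ\phi_t=\pi$ and extending to $\bigwedge^k$) lets me rewrite $F(t)=\pi_*\big((\phi_t^*\mathfrak{V})_g\big)$. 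Using the standard formula $\tfrac{d}{dt}\phi_t^*\mathfrak{V}=\phi_t^*\mathcal{L}_Y\mathfrak{V}$ for multivector fields and undoing the pullback, I obtain
\[
\frac{d}{dt}\,F(t)=\big(\pi_*\,\mathcal{L}_Y\mathfrak{V}\big)_{\phi_t(g)}.
\]
Thus the $t$-variation of the pointwise pushforward along a vertical flow is exactly the pushforward of the Lie derivative, and both implications of the lemma become immediate.

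For the direct implication, if $\mathfrak{V}$ projects then $F(t)=\bar{\mathfrak{V}}_{\pi(g)}$ is constant, so $F'(0)=(\pi_*\mathcal{L}_Y\mathfrak{V})_g=0$ for every $g$ and every $\xi\in\mathfrak{h}$, giving $\pi_*\mathcal{L}_{X_H^L}\mathfrak{V}=0$. Conversely, if $\pi_*\mathcal{L}_{X_H^L}\mathfrak{V}=0$ for all such fields, then $F'(t)\equiv0$ along every left-invariant vertical flow, so $F$ is constant along each one; since $H$ is connected it is generated by $\exp(\mathfrak{h})$, hence any two points of a fibre $gH$ are joined by a finite concatenation of such flows, and therefore $\pi_*\mathfrak{V}_g$ depends only on $\pi(g)$. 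This defines $\bar{\mathfrak{V}}$ on $G/H$, and I would verify its smoothness by composing with a local smooth section $\sigma$ of the submersion $\pi$, writing $\bar{\mathfrak{V}}|_U=\pi_*\circ\mathfrak{V}\circ\sigma$.

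I expect the genuinely delicate points to be conceptual rather than computational. First, one must fix the correct notion of projectability for a purely contravariant object: unlike differential forms, a $k$-vector field cannot be pulled back, so projectability has to be phrased as constancy of the fibrewise pushforward, and the entire argument hinges on the verticality of $Y$, i.e. on $\pi\circ\phi_t=\pi$, both to make $F(t)$ live in a single vector space and to commute $\pi_*$ past the $t$-derivative. Second, one must justify that testing only against the left-invariant fields $X_H^L$ suffices; this is exactly where connectedness of $H$ and the spanning property of $\{X_\xi^L:\xi\in\mathfrak{h}\}$ for the vertical distribution enter, allowing the infinitesimal condition to be integrated over an entire fibre. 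The hard part will be this integration step, together with the careful bookkeeping of base points in the differentiation identity above.
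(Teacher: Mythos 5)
Your proof is correct and takes essentially the same route as the paper: both read projectability as constancy of the fibrewise pushforward $\pi_{*g}\mathfrak{V}_g$ along the cosets $gH$, differentiate this quantity along the flows of left-invariant vector fields tangent to $H$, identify the derivative with $\pi_*\mathcal{L}_{X_H^L}\mathfrak{V}$, and use connectedness of $H$ to reverse the implication. The only divergence is technical rather than conceptual---the paper derives the key identity by contracting with the differentials $d\pi^*f^1,\ldots,d\pi^*f^k$ of pulled-back functions, while you use $\frac{d}{dt}\phi_t^*\mathfrak{V}=\phi_t^*\mathcal{L}_Y\mathfrak{V}$ together with $\pi\circ\phi_t=\pi$---and you spell out the concatenation-of-flows and smoothness arguments that the paper compresses into ``the previous reasoning can be reversed.''
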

	\begin{proof} The projection of $\mathfrak{V}$ to $G/H$ exists if and only if $\pi_{*g}\mathfrak{V}_g$ is the same for all those $g\in G$ projecting onto the same coset in $G/H$. This implies that every integral curve $\phi(t)$ with $\phi(0)=g$ of a vector field $X_H^L$ tangent to $H$ satisfies that
	$$
	\pi_{*\phi(t)}\mathfrak{V}_{\phi(t)}=\pi_{*g}\mathfrak{V}_g,\quad \forall t\in \mathbb{R}\quad\Leftrightarrow\quad \frac{d}{dt}\pi_{*\phi(t)}\mathfrak{V}_{\phi(t)}=0,\quad \forall t\in \mathbb{R}.
	$$
	Hence, for every $f^1,\ldots,f^k\in C^\infty(G/H)$ and $t\in \mathbb{R}$, we have
	$$
\frac{d}{dt}\pi_{*\phi(t)}\mathfrak{V}_{\phi(t)}((df^1)_{\pi(g)},\ldots,(df^k)_{\pi(g)})=0\Leftrightarrow \frac{d}{dt}\mathfrak{V}_{\phi(t)}((d\pi^*f^1)_{\phi(t)},\ldots,(d\pi^*f^k)_{\phi(t)})=0.
	$$
	Then, $0=\mathcal{L}_{X_H^L}\mathfrak{V} (d\pi^*f^1,\ldots,d\pi^*f^k)=(\pi_*\mathcal{L}_{X_H^L}\mathfrak{V})(df^1,\ldots,df^k)$. Hence, $\pi_* \mathcal{L}_{X_H^L}\mathfrak{V}=0$ for every $X^L_H$ tangent to $H$. This shows that the latter condition is necessary. Since the vector fields, $X_H^L$, tangent to $H$ span the whole tangent space to $H$ and $H$ is connected, the previous reasoning can be reversed to get that the given condition is sufficient.
	\end{proof}
	
	\section{Lie--Hamilton systems on $\mathbb{R}^2$ related to simple VG Lie algebras}\label{LHSg*}
	Let us show how a Lie--Hamilton system on $\mathbb{R}^2$ admitting a simple VG Lie algebra  can be considered as the restriction  of a certain type of Lie--Hamilton system on the dual to a Lie algebra to even-dimensional symplectic leaves of the KKS bracket on such a dual \cite{Va94}. Our approach extends the results and applications given in \cite[Theorem 2]{BCFHL18}.
	
	Let $\mathfrak{g}$ be a Lie algebra with a Lie bracket $[\cdot,\cdot]:\mathfrak{g}\times \mathfrak{g}\rightarrow \mathfrak{g}$. If $f\in C^\infty(\mathfrak{g}^*)$, where $\mathfrak{g}^*$ is the dual space to $\mathfrak{g}$, the canonical isomorphisms $\mathfrak{g}\simeq\mathfrak{g}^{**}$ and $T_\theta \mathfrak{g}^*\simeq \mathfrak{g}^*$, for any $\theta\in \mathfrak{g}^*$, allow us to consider $df_\theta:T_\theta \mathfrak{g}^*\rightarrow \mathbb{R}$ as a vector in $\mathfrak{g}\simeq \mathfrak{g}^{**}\simeq T^*_\theta \mathfrak{g}^*$. This leads to a Poisson bracket on $\mathfrak{g}^*$, called the {\it KKS bracket} \cite{Va94}, given by
	\begin{equation}\label{KKS}
	\{f,g\}(\theta):=\langle\theta, [df_\theta,dg_\theta]\rangle,\qquad \forall \theta\in \mathfrak{g}^*,\forall f,g\in C^\infty(\mathfrak{g}^*),
	\end{equation}
	where $\langle \theta,v\rangle$ stands for the value of the linear form $\theta\in \mathfrak{g}^*$ at $v\in \mathfrak{g}$. 
	
	Let us consider a basis $\{e_1,\ldots,e_r\}$ of $\mathfrak{g}$. The isomorphism $\mathfrak{g}\simeq\mathfrak{g}^{**}$ enables us to understand the elements of this basis as a coordinate system on $\mathfrak{g}^*$. Let us now define a differential equation on $\mathfrak{g}^*$ given by
	\begin{equation}\label{g*}
	\frac{d\theta_\beta}{dt}= \{h(t),e_\beta\},\qquad h(t):=\sum_{\alpha=1}^rb_\alpha (t)e_\alpha,\qquad \beta=1,\ldots,r,
	\end{equation}
	where $\theta_\beta=\langle \theta,e_\beta\rangle$ for $\beta=1,\ldots,r$, and $b_1(t),\ldots,b_r(t)$ are
	 arbitrary $t$-dependent functions. The vector fields  $X_\alpha f:=\{e_\alpha,f\}$ for every $f\in C^\infty(\mathfrak{g}^*)$ and $\alpha=1,\ldots,r$, span a finite-dimensional Lie algebra, $V_{\mathfrak{g}^*}$, since
	\begin{equation*}
	[X_\alpha,X_\beta]f=\{e_\alpha,\{e_\beta,f\}\}-\{e_\beta,\{e_\alpha,f\}\}=\{\{e_\alpha,e_\beta\},f\}=\sum_{\gamma=1}^rc_{\alpha\beta}\,^\gamma X_\gamma f,\quad \forall f\in C^\infty(\mathfrak{g}^*),
	\end{equation*}
	where $c_{\alpha\beta}\,^\gamma$, with $\alpha,\beta,\gamma=1,\ldots,r$, are the structure constants of $\mathfrak{g}$ in the chosen basis. Hence, $V_{\mathfrak{g}^*}=\langle X_1,\ldots,X_r\rangle$ is a Lie algebra and $\rho:v\in \mathfrak{g}\mapsto X_v:=\{v,\cdot\}\in V_{\mathfrak{g}^*}$ is a Lie algebra morphism. Moreover, $\ker \rho=\{v\in \mathfrak{g}:\{v,e_\alpha\}=[v,e_\alpha]=0,\alpha=1,\ldots,r\}$, namely $\ker\rho=Z(\mathfrak{g})$, where $Z(\mathfrak{g})$ denotes the centre of $\mathfrak{g}$.
	
	In view of the above comments, (\ref{g*}) is the system of differential equations associated with
	$$
	X_{\mathfrak{g}^*}:=\sum_{\alpha=1}^rb_\alpha(t)X_\alpha,
	$$
	which is a Lie system admitting a VG Lie algebra $V_{\mathfrak{g}^*}\simeq \mathfrak{g}/Z(\mathfrak{g})$. Moreover, since $Z(\mathfrak{g})=0$ due to the assumption on the simplicity of $\mathfrak{g}$, one obtains that $V_{\mathfrak{g}^*}\simeq \mathfrak{g}$. By construction, $X_1,\ldots,X_r$ are Hamiltonian relative to the KKS bracket (\ref{KKS}) on $\mathfrak{g}^*$. Hence, (\ref{g*}) becomes a Lie--Hamilton system.
	
	Recall that a Poisson structure on $N$ gives rise to a stratification of $N$ into symplectic submanifolds \cite{Va94}. In particular, $\mathfrak{g}^*$ admits a stratification, $\mathfrak{F}$, induced by the KKS bracket. The tangent space to each leaf of the stratification is spanned by the Hamiltonian vector fields of the KKS bracket \cite{Va94}. Consequently, $X_1,\ldots,X_r$ are tangent to each leaf $\mathfrak{F}_k$ of $\mathfrak{F}$, and system (\ref{g*}) can be restricted to each $\mathfrak{F}_k$. We write $X^{\mathfrak{F}_k}$ for the restriction of $X_{\mathfrak{g}^*}$ to $\mathfrak{F}_k$. We also denote by $X_1^k,\ldots,X_r^k$ the restrictions of $X_1,\ldots,X_r$ to $\mathfrak{F}_k$, respectively.
	
	Since the elements of $V_{\mathfrak{g}^*}$ can be restricted to each $\mathfrak{F}_k$, such restrictions give rise to a VG Lie algebra  $V^k:=\langle X^k_1,\ldots, X^k_r\rangle$ for the system $X^{\mathfrak{F}_k}$. It may happen that $V^k$ is not isomorphic to $V_{\mathfrak{g}^*}$, since some  restrictions of the vector fields of  $V_{\mathfrak{g}^*}$ to $\mathfrak{F}_k$ may vanish. Nevertheless, the restriction mapping $\rho_k:X\in V_{\mathfrak{g}^*}\mapsto X|_{\mathfrak{F}_k}\in V^k$ is a surjective Lie algebra morphism. Using the above facts and since $V_{\mathfrak{g}^*}\simeq\mathfrak{g}$ is simple by assumption, we obtain that the kernel of $\rho_k$ must be equal to zero or $V_{\mathfrak{g}^*}$, because these are the only ideals of $V_{\mathfrak{g}^*}$. Let us prove that $\ker \rho_k=0$. Every Hamiltonian vector field on $\mathfrak{g}^*$ is of the form $X_h:=\{h,\cdot\}$ for a certain $h\in C^\infty(N)$. Since $\{e_1,\ldots,e_r\}$ are coordinates on $\mathfrak{g}^*$, one obtains that $h$ is a function of $e_1,\ldots,e_r$, i.e. $h=h(e_1,\ldots,e_r)$, and, using the properties of Poisson brackets, one gets that
 	$$
	X_hf=\{h(e_1,\ldots,e_r),f\}=\sum_{\alpha=1}^r\frac{\partial h}{\partial e_\alpha}\{e_\alpha,f\}=\sum_{\alpha=1}^r\frac{\partial h}{\partial e_\alpha}X_\alpha f,\qquad \forall f\in C^\infty(\mathfrak{g}^*).
	$$
	Hence, the vector fields $X_1,\ldots,X_r\in V_{\mathfrak{g}^*}$ span the tangent space to the symplectic leaves of $\mathfrak{g}^*$. In other words, the elements of $V^k$ span the tangent space to each $\mathfrak{F}_k$. If we assume that $\dim\mathfrak{F}_k\neq 0$, then the elements of $V^k$ span the tangent space to $\mathfrak{F}_k$ and $V^k\neq0$. Hence, $\ker \rho_k\neq V_{\mathfrak{g}^*}$ and $\ker \rho_k=0$. Thus, (\ref{g*}) can be restricted to each leaf, $\mathfrak{F}_k$, giving rise to a Lie--Hamilton system with a VG Lie algebra  of Hamiltonian vector fields $V^k\simeq \mathfrak{g}$. 
	
	Summarising, we have proved the following result.
	
	\begin{Theorem}\label{Rg*} Every simple Lie algebra $\mathfrak{g}$ defines a Lie--Hamilton system (\ref{g*}) on $\mathfrak{g}^*$ with respect  to the KKS Poisson bracket. The restriction of (\ref{g*}) to each non-zero dimensional symplectic leaf gives rise to a new Lie--Hamilton system with a VG Lie algebra isomorphic to $\mathfrak{g}$.
	\end{Theorem}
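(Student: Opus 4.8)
The plan is to prove the two assertions separately, treating the construction of the global Lie--Hamilton system on $\mathfrak{g}^*$ first and then controlling what happens under restriction to a symplectic leaf.

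First I would establish that (\ref{g*}) is a Lie--Hamilton system on $\mathfrak{g}^*$. Choosing a basis $\{e_1,\ldots,e_r\}$ of $\mathfrak{g}$ and using the isomorphism $\mathfrak{g}\simeq\mathfrak{g}^{**}$ to read the $e_\alpha$ as linear coordinates on $\mathfrak{g}^*$, I would define the vector fields $X_\alpha f:=\{e_\alpha,f\}$ via the KKS bracket (\ref{KKS}). The Jacobi identity for $\{\cdot,\cdot\}$ immediately yields $[X_\alpha,X_\beta]=\sum_\gamma c_{\alpha\beta}{}^\gamma X_\gamma$, so that $\rho:v\in\mathfrak{g}\mapsto X_v=\{v,\cdot\}$ is a Lie algebra morphism onto $V_{\mathfrak{g}^*}=\langle X_1,\ldots,X_r\rangle$. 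Its kernel is exactly $Z(\mathfrak{g})$, which vanishes because $\mathfrak{g}$ is simple; hence $V_{\mathfrak{g}^*}\simeq\mathfrak{g}$. Since each $X_\alpha$ is by construction Hamiltonian with Hamiltonian function $e_\alpha$, the $t$-dependent vector field $X_{\mathfrak{g}^*}=\sum_\alpha b_\alpha(t)X_\alpha$ takes values in a VG Lie algebra of Hamiltonian vector fields, which is the required Lie--Hamilton property.

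Next I would pass to a symplectic leaf. Recalling that a Poisson structure stratifies $\mathfrak{g}^*$ into symplectic leaves to which all Hamiltonian vector fields are tangent, the fields $X_1,\ldots,X_r$ restrict to each leaf $\mathfrak{F}_k$, yielding a system $X^{\mathfrak{F}_k}$ with $V^k=\langle X_1^k,\ldots,X_r^k\rangle$ and a surjective restriction morphism $\rho_k:V_{\mathfrak{g}^*}\to V^k$. Because $V_{\mathfrak{g}^*}\simeq\mathfrak{g}$ is simple, $\ker\rho_k$ can only be $0$ or all of $V_{\mathfrak{g}^*}$, so everything reduces to excluding the collapse $V^k=0$ on a leaf of positive dimension. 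Here I would use that any Hamiltonian vector field $X_h=\{h,\cdot\}$ with $h=h(e_1,\ldots,e_r)$ expands as $X_h=\sum_\alpha(\partial h/\partial e_\alpha)X_\alpha$, whence $X_1,\ldots,X_r$ span the distribution tangent to the symplectic leaves; on a leaf with $\dim\mathfrak{F}_k\neq 0$ this forces $V^k\neq 0$, so $\ker\rho_k=0$ and $V^k\simeq\mathfrak{g}$. Finally, since the restriction of a Hamiltonian vector field to a symplectic leaf is Hamiltonian for the induced symplectic form, $X^{\mathfrak{F}_k}$ is again a Lie--Hamilton system.

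The main obstacle is precisely the step ruling out the degenerate case $\ker\rho_k=V_{\mathfrak{g}^*}$: a priori the restrictions of several generators of $V_{\mathfrak{g}^*}$ to a given leaf could vanish and shrink the Lie algebra. The argument leans essentially on simplicity, which cuts the lattice of ideals down to $\{0,V_{\mathfrak{g}^*}\}$, combined with the spanning property of the Hamiltonian vector fields on a non-trivial leaf. Without simplicity one would only obtain $V^k\simeq\mathfrak{g}/\ker\rho_k$ for a possibly nontrivial ideal, so I would be careful to invoke the hypothesis exactly at this point and to assume throughout that the leaf under consideration has non-zero dimension, as the statement requires.
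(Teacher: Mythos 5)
Your proposal is correct and follows essentially the same route as the paper's own argument: the morphism $\rho$ with kernel $Z(\mathfrak{g})=0$, the tangency of Hamiltonian vector fields to the symplectic leaves, the surjective restriction morphism $\rho_k$ whose kernel must be an ideal of the simple algebra $V_{\mathfrak{g}^*}\simeq\mathfrak{g}$, and the expansion $X_h=\sum_\alpha(\partial h/\partial e_\alpha)X_\alpha$ to show the spanning property that rules out $\ker\rho_k=V_{\mathfrak{g}^*}$ on non-zero-dimensional leaves. Your closing observation that the restriction of a Hamiltonian vector field to a leaf is Hamiltonian for the induced symplectic form makes explicit a point the paper leaves implicit, but the structure of the proof is the same.
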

	
	In particular, if $\dim \mathfrak{F}_k=2$, then  $V^k$ is locally diffeomorphic to one of the classes of Lie--Hamilton systems on $\mathbb{R}^2$ admitting a VG Lie algebra isomorphic to $\mathfrak{g}$. 
		In view of Table \ref{table1} and since $\mathfrak{sl}^*_2$ and $\mathfrak{so}^*_3$ have two-dimensional symplectic leaves (cf. \cite{PSWZ76,SW14}), Theorem \ref{Rg*} potentially allows one to obtain the Lie--Hamilton systems on $\mathbb{R}^2$ related to the VG Lie algebras isomorphic to $\mathfrak{sl}_2$ and $\mathfrak{so}_3$, i.e.  P$_2$, P$_3$, I$_4$, and I$_5$.	According to Table \ref{table1}, there exist three classes of VG Lie algebras of Hamiltonian vector fields on $\mathbb{R}^2$ isomorphic to $\mathfrak{sl}_2$: P$_2$, I$_4$, and I$_5$. To check that Theorem \ref{Rg*} allows us to recover all of them, we have to verify all the restrictions of (\ref{g*}) on $\mathfrak{sl}^*_2$ to each symplectic leaf relative to the KKS bracket in detail (see Figure \ref{Fig1}). On the other hand, it is immediate that Theorem \ref{Rg*} gives rise to a Lie--Hamilton system on $\mathbb{R}^2$ admitting a VG Lie algebra locally diffeomorphic to P$_3$, since it is the only class of Hamiltonian VG Lie algebras on $\mathbb{R}^2$ isomorphic to $\mathfrak{so}_3$ (see Table \ref{table1}).
		
		Although Lie--Hamilton systems on $\mathbb{R}^2$ related to VG Lie algebras isomorphic to $\mathfrak{sl}_2$ and $\mathfrak{so}_3$ were very briefly studied in \cite{BCFHL18}, our analysis here is much more detailed and it additionally shows, as a bonus, the existence of additional features of such Lie--Hamilton systems, which retrieves in a more natural and general manner results given in \cite{GLMV19,LL18}.
		
\newpage
	$\bullet$ { VG Lie algebras isomorphic to $\mathfrak{sl}_2$ (classes P$_2$, I$_4$, and I$_5$)}:
	
	Let $\{e_1,e_2,e_3\}$ be a basis of $\mathfrak{sl}_2$ satisfying the commutation relations
	$$
	[e_1,e_2]=e_1,\qquad [e_1,e_3]=2e_2,\qquad [e_2,e_3]=
	e_3.
	$$
	In the given basis, system (\ref{g*}) takes the form
\begin{equation}\label{Syssl2}
\frac{de_1}{dt}=-2b_3(t)e_2-b_2(t)e_1,\quad\frac{de_2}{dt}=b_1(t)e_1-b_3(t)e_3,\quad \frac{de_3}{dt}=2b_1(t)e_2+b_2(t)e_3.
\end{equation}

\begin{figure}
    \centering
\includegraphics[scale=0.5]{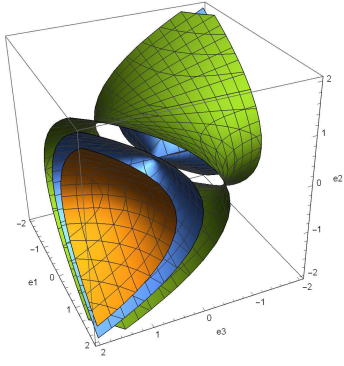}    \caption{Representative symplectic submanifolds of the KKS bracket on $\mathfrak{sl}_2^*$.}
    \label{Fig1}
\end{figure}		
	Hence, $\mathfrak{sl}_2^*$ admits a Casimir function given by
	$
	\mathfrak{C}_{\mathfrak{sl}_2}:=e_1e_3-e_2^2
	$ (see   \cite{PSWZ76,SW14}). The symplectic leaves of the KKS bracket on $\mathfrak{sl}^*_2$ are therefore given by the submanifolds where the function $\mathfrak{C}_{\mathfrak{sl}_2}$ takes a constant value. Then, Hamiltonian vector fields are tangent to such  submanifolds. This implies that we have three main types of two-dimensional symplectic leaves where $\mathfrak{C}_{\mathfrak{sl}_2}$ is positive (two-sheeted hyperboloids), negative (one-sheeted hyperboloids), or zero (a two-sided cone without vertex). Three types of symplectic leaves are illustrated in the diagram aside in orange, green, and blue colour, respectively.
	\vskip 0.1cm
	In the coordinate system
	$
	\{e_1,e_2,k:=e_1e_3-e_2^2\},
	$
	one obtains that the restriction of (\ref{Syssl2}) to each leaf of the KKS bracket, reads
	\begin{equation}\label{sl2Fk}
	\frac{de_1}{dt}=-b_2(t)e_1-2b_3(t)e_2,\qquad \frac{de_2}{dt}=b_1(t)e_1-b_3(t)\frac{k+e_2^2}{e_1},\qquad k\in \mathbb{R}.
	\end{equation}

	To determine to which class in Table \ref{table1} systems (\ref{sl2Fk}) belong to, one can use the so-called {\it Casimir tensor field} (see \cite{BHLS15})
	\begin{equation}\label{CTsl2}
	\mathfrak{T}_{\mathfrak{sl}_2}:=\frac 12(X^{k}_1\otimes X^{k}_3+X^{k}_3\otimes X^{k}_1)-X^{k}_2\otimes X^{k}_2.
	\end{equation}
	This tensor field appears naturally as the restriction to each $\mathfrak{F}_k$ of $\frac 12(X_1\otimes X_3+X_3\otimes X_1)-X_2\otimes X_2$ on $\mathfrak{sl}_2^*$. In fact, since the previous tensor field is constructed as a linear combination of tensor products of Hamiltonian vector fields, it is then tangent to each $\mathfrak{F}_k$ and it gives rise to $\mathfrak{T}_{\mathfrak{sl}_2}$. 
	
	Recall that the two-dimensional symplectic leaves in $\mathfrak{sl}_2^*$ relative to the KKS bracket are given by its symplectic leaves in  $\mathfrak{sl}^*_2\backslash\{0\}$. Hence, one can use  \cite[Theorem 4.4]{BHLS15}, which claims that the sign, $k_{\mathfrak{T}_{\mathfrak{sl}_2}}$, of the matrix of the coefficients of  $\mathfrak{T}_{\mathfrak{sl}_2}$ at a single point of $\mathfrak{F}_k$ classifies to which class of Table \ref{table1}, the Lie algebra $V^k\simeq \mathfrak{sl}_2$ is locally diffeomorphic to. In particular, if $k_{\mathfrak{T}_{\mathfrak{sl}_2}}>0$, then $V^k$ is locally diffeomorphic to P$_2$; while $k_{\mathfrak{T}_{\mathfrak{sl}_2}}=0$ gives that $V^k$ is locally diffeomorphic to I$_5$; and finally $k_{\mathfrak{T}_{\mathfrak{sl}_2}}<0$ indicates that $V^k$ is locally diffeomorphic to I$_4$. 
	
	To check which class of Lie algebras on the plane is $V^k$ diffeomorphic to, it is enough to see that $\{e_1,e_2\}$ are local coordinates at a certain open subset of each two-dimensional leaf $\mathfrak{F}_k$. Hence, $k_{\mathfrak{T}_{\mathfrak{sl}_2}}={\rm sign}(\det [\mathfrak{T}_{\mathfrak{sl}_2}]_{\alpha\beta})={\rm sign}(\det \mathfrak{T}_{\mathfrak{sl}_2}(de_\alpha,de_\beta))$ for $\alpha,\beta=1,2$. Then,
	\begin{equation}\label{class}
	\det(\mathfrak{T}_{\mathfrak{sl}_2})=\det\left[\begin{array}{cc}-e_1^2&-e_2e_1\\-e_2e_1&-e_1e_3\end{array}\right]=e_1^2(e_1e_3-e_2^2)=e_1^2k.
	\end{equation}
Thus, each one of the main three types of symplectic leaves leads to a VG Lie algebra of a different type. In particular, $k>0$ yields a VG Lie algebra locally diffeomorphic to P$_2$ on a two-sheeted hyperboloid; $k<0$ leads to a VG Lie algebra locally diffeomorphic to I$_4$ on a one-sheeted hyperboloid; and $k=0$ induces a VG Lie algebra locally diffeomorphic to I$_5$ on a two-sided cone without vertex. In turn, this makes the Lie system
\begin{equation}\label{restr}
X^{\mathfrak{F}_k}=\sum_{\alpha=1}^3b_\alpha(t)X_\alpha^k
\end{equation}
into a Lie--Hamilton system possessing a VG Lie algebra locally diffeomorphic to P$_2$, I$_4$, and I$_5$, when $k>0$, $k<0$, or $k=0$, respectively. Recall that the symplectic form related to (\ref{restr}) is the one associated with the restriction to $\mathfrak{F}_k$ of the Poisson bivector on $\mathfrak{sl}^*_2$.

For the sake of completeness, we provide the coordinate expression for (\ref{restr}). In particular the system can be mapped, for $k>0$, onto the system
	$$
	\frac{d\xi}{dt}=b_1(t)\frac{\partial}{\partial x}+b_2(t)\left(x\frac{\partial}{\partial x}+y\frac{\partial}{\partial y}\right)+b_3(t)\left((x^2-y^2)\frac{\partial}{\partial x}+2xy\frac{\partial}{\partial y}\right), \qquad \xi:=(x,y)\in \mathbb{R}^2,
	$$
	through the change of variables $x:=e_2/e_1$ and $y:=\sqrt{k}/e_1$. In the case $k<0$, system (\ref{sl2Fk}) can be mapped, via the change of variables $x:=(e_2-\sqrt{-k})/e_1$ and $y:=(e_2+\sqrt{-k})/e_1$, onto the system
	$$
	\frac{d\xi}{dt}=b_1(t)\left(\frac{\partial}{\partial x}+\frac{\partial}{\partial y}\right)+b_2(t)\left(x\frac{\partial}{\partial x}+y\frac{\partial}{\partial y}\right)+b_3(t)\left(x^2\frac{\partial}{\partial x}+y^2\frac{\partial}{\partial y}\right),\qquad \xi:=(x,y)\in \mathbb{R}^2.
	$$
	Finally, the change of variables $x:=e_2/e_1, y:=1/\sqrt{|e_1|}$ maps (\ref{sl2Fk}) into
	$$
	\frac{d\xi}{dt}=b_1(t)\frac{\partial}{\partial x}+b_2(t)\left(x\frac{\partial}{\partial x}+\frac y2\frac{\partial}{\partial y}\right)+b_3(t)\left(x^2\frac{\partial}{\partial x}+xy\frac{\partial}{\partial y}\right),\qquad \xi:=(x,y)\in \mathbb{R}^2.
	$$
	
	It is worth noting that our approach also explains other aspects of the  Lie--Hamilton systems (\ref{restr}). In fact, the Casimir tensor field (\ref{CTsl2}) can also be understood as a symmetric contravariant tensor field on each leaf $\mathfrak{F}_k$ that is invariant relative to the Lie derivative with the elements of $V^k$. When $k\neq 0$, the matrix of the coefficients of (\ref{CTsl2}) becomes non-degenerate by (\ref{class}), and $\mathfrak{T}_{\mathfrak{sl}_2}$ admits an inverse $g_{\mathfrak{sl}_2}$ that is a metric invariant relative to the Lie derivaties with the elements of $V^k$. Then, the Lie--Hamilton systems on the leaves with $k\neq 0$ become  pseudo-Riemannian Lie systems as proved in \cite[Table 1]{LL18}. In particular, the metric is positive  definite for $k>0$, i.e. Riemannian, while it is pseudo-Riemannian for $k<0$.
	
		$\bullet$ { VG Lie algebra isomorphic to $\mathfrak{so}_3$ (class P$_3$):}
	
	Consider the basis of $\mathfrak{so}_3$ of the form $\{e_1,e_2,e_3\}$ admitting the structure constants\footnote{We recall that these commutation relations are not the same as the ones of the basis of P$_3$ given in Table \ref{table1}.}
 	$$
 	[e_1,e_2]=e_3,\qquad [e_2,e_3]=e_1,\qquad [e_3,e_1]=e_2.
	$$
Then, system (\ref{g*}) takes in this case the form
\begin{equation}\label{so3}
\frac{de_1}{dt}=b_3(t)e_2-b_2(t)e_3,\quad\frac{de_2}{dt}=b_1(t)e_3-b_3(t)e_1,\quad\frac{de_3}{dt}=b_2(t)e_1-b_1(t)e_2.
\end{equation}
	In this case, the Casimir function of the KKS Poisson structure on $\mathfrak{so}_3^*$ reads \cite{PSWZ76,SW14}
$$
\mathfrak{C}_{\mathfrak{so}_3}:=e_1^2+e_2^2+e_3^2.
$$
	The two-dimensional submanifolds where $\mathfrak{C}_{\mathfrak{so}_3}$ becomes a constant $k^2$, i.e. the symplectic surfaces associated with the KKS bracket, are therefore spheres, $\mathfrak{F}_k$, with different radius $k>0$. As previously explained in the general theory of this section, the Lie algebra $V^k$ defined on each sphere becomes a VG Lie algebra isomorphic to $\mathfrak{so}_3$. In view of Table \ref{table1}, the Lie algebra $V^k$ is locally diffeomorphic to  P$_3$.
	
	For the sake of completeness, we can describe the restriction process in detail. If one consider the set of variables 
	$$
	k:=\sqrt{e_1^2+e_2^2+e_3^2}, \qquad r:=\sqrt{e_1^2+e_2^2},\qquad {\rm tg}\,\varphi:=\frac{e_2}{e_1}, 
	$$
	the restriction of (\ref{so3}) to a symplectic leaf $\mathfrak{F}_k$ reads
	\begin{equation}\label{Resso3}
	\frac{dr}{dt}=\sqrt{k^2-r^2}[b_1(t)\sin\varphi -b_2(t)\cos\varphi],\qquad \frac{d\varphi}{dt}=-b_3(t)+\sqrt{k^2/r^2-1}[b_1(t)\cos\varphi+b_2(t)\sin\varphi].
	\end{equation}
As stated previously, there must exist a change of variables mapping this system of differential equations into a Lie system with the VG Lie algebra P$_3$. In fact, the change of variables
 $$
 x:= -\frac{1}{r}[\sqrt{k^2-r^2}+ k]\sin \varphi,\qquad y:=\frac 1r[\sqrt{k^2-r^2}+ k] \cos \varphi
 $$
 maps (\ref{Resso3}), i.e. the restriction of (\ref{so3}) to $\mathfrak{F}_k$,  onto 
 $$
 \frac{d\xi}{dt}=b_1(t)\left(\frac 12(1+x^2-y^2)\frac{\partial}{\partial x}+xy\frac{\partial}{\partial y}\right)+b_2(t)\left(xy\frac{\partial}{\partial x}+\frac 12(1+y^2-x^2)\frac{\partial}{\partial y}\right)+b_3(t)\left(y\frac{\partial}{\partial x}-x\frac{\partial}{\partial y}\right),
 $$
 where $\xi:=(x,y)\in \mathbb{R}^2$.
 
It is worth noting that the Casimir tensor field relative to $\mathfrak{C}_{\mathfrak{so}_3}$, namely
\begin{equation}\label{Rie}
\mathfrak{T}_{\mathfrak{so}_3}:=X_1^{k}\otimes X_1^{k}+X_2^{k}\otimes X_2^{k}+X_3^{k}\otimes X_3^{k},
\end{equation}
is tangent to $\mathfrak{F}_k$. The coefficients of $\mathfrak{T}_{\mathfrak{so}_3}$ in the local coordinates $\{e_1,e_2\}$, which are well defined on a point with $k^2> e_1^2+e_2^2$ of every leaf $\mathfrak{F}_k$ for $k>0$, read
	\begin{equation}\label{metric2}
	\det(\mathfrak{T}_{\mathfrak{so}_3})=\det\left[\begin{array}{cc}e_3^2+e_2^2&-e_1e_2\\-e_2e_1&e_3^2+e_1^2\end{array}\right]=(k^2-e_1^2-e_2^2)k^2> 0
	\end{equation}
and $\mathfrak{T}_{\mathfrak{so}_3}$ becomes a non-degenerate tensor field. Its inverse, $g_{\mathfrak{so}_3}$, becomes a Riemmanian metric since the determinant of its coefficients is positive. As a similar result can be obtained in any  local coordinate system around points of each $\mathfrak{F}_k$, the leaf $\mathfrak{F}_k$ becomes a Riemmanian manifold. Since (\ref{Rie}) satisfies that $\mathcal{L}_{X^k_i}\mathfrak{T}_{\mathfrak{so}_3}=0$ for $i=1,2,3$, then  $V^k$ consists of Killing vector fields with respect to $g_{\mathfrak{so}_3}$, as proved in \cite{LL18}, and the Lie--Hamilton systems on $\mathfrak{F}_k$ are also Riemannian--Lie systems.

To finish this section, we stress that Theorem \ref{Rg*} implies that the Lie algebra $\mathfrak{sl}_3$ or other simple Lie algebras have no two-dimensional symplectic leaves. Otherwise, such Lie algebras of vector fields would give rise to Lie algebras of Hamiltonian vector fields on $\mathbb{R}^2$, which are absent in Table \ref{table1}. Moreover, the method exposed here can be employed to obtain Lie--Hamilton systems on higher-dimensional manifolds. 

	\section{Non-simple VG Lie algebras and symplectic foliations}\label{NonSimple}
It is immediate that the method showed in the last section can be slightly modified to give rise to Lie--Hamilton systems on the non-zero-dimensional symplectic leaves of the symplectic stratifications of the KKS bracket of general, not necessarily simple, Lie algebras $\mathfrak{g}$. Nevertheless, the induced VG Lie algebras on each leaf, $V^k$, do not need to be isomorphic neither to $V_{\mathfrak{g}^*}$ nor to $\mathfrak{g}$. The difficulties due  to the lack of an isomorphism between $\mathfrak{g}$ and $V^k$ were not addressed in \cite{BCFHL18}. Despite above mentioned problems, we prove in this section that the method given in Section \ref{LHSg*} can be extended to the Lie algebras $\mathfrak{iso}_2$ and $\mathfrak{iso}_{1,1}$ to obtain  Lie--Hamilton systems on their symplectic leaves related to VG Lie algebras locally diffeomorphic to elements of the class I$_{14A}$  for $r=2$. 
	
	$\bullet$ VG Lie algebra isomorphic to $\mathfrak{iso}_2$ (class I$_{14A}$ for $r=2$):

Let $\{e_1,e_2,e_3\}$ be a basis of the Lie algebra $\mathfrak{iso}_2$, i.e. the abstract Lie algebra isomorphic to a VG Lie algebra of the class I$_{14A}$, with commutation relations
\begin{equation}\label{ConRel}
[e_1,e_2]=0,\qquad [e_1,e_3]=-e_2,\qquad [e_2,e_3]=e_1.
\end{equation}
The Poisson structure on $\mathfrak{iso}_2^*$ given by the KKS bracket admits a Casimir function \cite{PSWZ76,SW14} given by
$$
\mathfrak{C}_{\mathfrak{iso}_2}:=e_1^2+e_2^2.
$$
Consequently, the symplectic stratification on $\mathfrak{iso}_2^*$ is given by the cylinders $e_1^2+e_2^2=k$, with $k>0$, and the set of points with $e_1=e_2=0$. We focus on the two-dimensional symplectic leaves, $\mathfrak{F}_k$ with $k>0$, of this stratification.
Using (\ref{ConRel}), we obtain that the vector fields $X_\alpha:=\{e_\alpha,\cdot\}$ for $\alpha=1,2,3$ read
$$
X_1=-e_2\frac{\partial}{\partial e_3},\qquad X_2=e_1\frac{\partial}{\partial e_3},\qquad X_3=e_2\frac{\partial}{\partial e_1}-e_1\frac{\partial}{\partial e_2},
$$
on the coordinates $\{e_1,e_2,e_3\}$ of $\mathfrak{iso}^*_2$.
The restrictions of $X_1,X_2,X_3$ to the two-dimensional symplectic leaves $\mathfrak{F}_k$ can be expressed on each leaf by restricting the cylindrical coordinates $\{r:=\sqrt{e_1^2+e_2^2}=\sqrt{k},$ $\varphi:={\rm arc\, tan}(e_2/e_1), e_3\}$ on $\mathfrak{iso}^*_2$ to each leaf, i.e. via $\{\varphi,e_3\}$. Then,
$$
X^k_1=-r\sin \varphi\frac{\partial}{\partial e_3},\qquad X^k_2=r\cos \varphi\frac{\partial}{\partial e_3},\qquad X^k_3=-\frac{\partial}{\partial \varphi}.
$$
Therefore, one obtains that $X^k_1,X^k_2,X^k_3$ are Hamiltonian and span a three-dimensional Lie algebra $V^k\simeq \mathfrak{iso}_2$. Since $\langle X^k_1,X^k_2\rangle \simeq \mathbb{R}^2$, $X_1^k\wedge X_2^k=0$, and the Lie brackets of $X_3^k$ with elements of $\langle X^k_1,X^k_2\rangle$ belong also to  the latter linear space, $V^k$ must be locally diffeomorphic to a Lie algebra of the class I$_{14A}$ for $r=2$.

$\bullet$ VG Lie algebras isomorphic to $\mathfrak{iso}_{1,1}$ (class I$_{14A}$ for $r=2$):

Let $\{e_1,e_2,e_3\}$ be a basis  of $\mathfrak{iso}_{1,1}$  with the structure constants given by
$$
[e_1,e_2]=0,\qquad [e_1,e_3]=e_1,\qquad [e_2,e_3]=-e_2.
$$
Then, a Casimir function is given by 
$$
\mathfrak{C}_{\mathfrak{iso}_{1,1}}:=e_1e_2.
$$
Hence, the two-dimensional symplectic leaves associated with the KKS Poisson bracket on $\mathfrak{iso}_{1,1}$ are given by surfaces $e_1e_2=k\neq 0$ or the four  semiplanes with $e_1=0$ or $e_2=0$. We now  write $\mathfrak{F}_k$ for the symplectic leaf for a certain $k\neq 0$.

Let us derive the restrictions of the vector fields $X_1,X_3,X_3$ on $\mathfrak{iso}_{1,1}^*$, i.e. 
$$
X_1=e_1\frac{\partial}{\partial e_3},\qquad X_2=-e_2\frac{\partial}{\partial e_3},\qquad X_3=-e_1\frac{\partial}{\partial e_1}+e_2\frac{\partial}{\partial e_2},
$$
 to the symplectic leaves $\mathfrak{F}_k$ by defining the coordinates $e_1=:e^{h},k:=e_1e_2$, and $e_3$, when they will have sense. This gives rise to the expressions
$$
X^k_1=e^h\frac{\partial}{\partial e_3},\qquad X^k_2=-e^{-h}k\frac{\partial}{\partial e_3},\qquad X^k_3=-\frac{\partial}{\partial h}.
$$
Therefore, the restriction to a leaf $\mathfrak{F}_k$ with $k\neq 0$ is a Lie system with a VG Lie algebra belonging to the class I$_{14A}$ for $r=2$.

It is immediate that $V^k$ for $k=0$ gives rise to a VG Lie algebra on a semi-plane isomorphic to $\mathfrak{h}_{2}$, i.e. a non-Abelian two-dimensional Lie algebra, and locally diffeomorphic to I$_{14A}$ for $r=1$.

\section{An approach through automorphic Lie systems and left-invariant bivector fields}

The method established in Sections \ref{LHSg*} and \ref{NonSimple} cannot be easily applied to all Lie--Hamilton systems on $\mathbb{R}^2$. On the one hand, this is due to the fact that the abstract Lie algebras isomorphic  to the VG Lie algebras in Table \ref{table1} may not posses a sufficient number of Casimir functions or invariants to reduce the system to a two-dimensional submanifold with a compatible symplectic structure, e.g. the Lie algebra I$_{16}$ for $r=1$ has no Casimir function (cf. \cite{PSWZ76,SW14}). On the other hand, one could consider other Lie algebras $\mathfrak{g}$ not given in Table \ref{table1} and expect that the restriction of the vector fields $X_1,\ldots,X_r$ on $\mathfrak{g}^*$ onto certain two-dimensional symplectic leaves (relative to the KKS bracket) will give rise to Lie--Hamilton systems with the remaining VG Lie algebras in Table \ref{table1}. But there is no hint about the form of the general Lie algebras $\mathfrak{g}$ to be used. 

Above problems motivate our following alternative model for Lie--Hamilton systems on $\mathbb{R}^2$ (and other possible ones on higher-dimensional manifolds). Our approach suggests that Lie--Hamilton systems on $\mathbb{R}^2$ can be recovered as projections of automorphic Lie systems with appropriate projectable invariants in $\mathcal{T}_{\rm inv}^L(G)$ or, more generally, elements of $\mathcal{T}(G)$.

Let us provide our new approach. Consider a Lie system $X=\sum_{\alpha=1}^rb_\alpha(t)X_\alpha$ on a two-dimensional connected manifold $N$ with a VG Lie algebra $V:=\langle X_1,\ldots,X_r\rangle$. The integration of $V$ allows us to obtain a Lie group action $\Phi:G\times N\rightarrow N$ whose Lie algebra of fundamental vector fields is equal to $V$ and $\dim G=\dim V$. Assume that $\mathcal{D}^V=TN$, which yields that $V$ is a locally transitive VG Lie algebra. Then, 
$$
\Phi_x:g\in G\mapsto \Phi(g,x)\in N
$$
is surjective for every $x\in N$. If $G_x$ is the isotropy group of $x\in N$, then one gets the commutative diagram
\begin{equation}\label{diag}
\begin{tikzpicture}[
roundnode/.style={},
squarednode/.style={},node distance=1.5cm,auto
]
\node(G)   {$G$};
\node(N)  [right=of G] {$N$};
\node(quotient) [below=of G] {$G/G_x$};

\draw[->] (G)  edge node {$\Phi_x$} (N) ;
\draw[->] (G) edge node {$\pi$} (quotient);
\draw[->] (quotient) edge node [below] {$\widehat{\Phi}_x$} (N);

\end{tikzpicture}
\end{equation}
where $\widehat {\Phi}_x$ is a diffeomorphism. Moreover, if $X^R_v$ is the right-invariant vector field on $G$ with $X^R_v(0)=v$ for $v\in \mathfrak{g}$ and $X_v$ is the fundamental vector field\footnote{We choose the fundamental vector field $X_v$ to be defined as $X_v(x):=\frac{d}{dt}\big|_{t=0}\Phi(\exp(-tv),x)$ for every $x\in N$} of $\Phi$ associated with $v$, then
$$
\Phi_{x*g}({X_v^R})_g=\frac{d}{dt}\bigg|_{t=0}\!\!\!\Phi_x(\exp(tv)g)=-X_v(gx),\,\, \forall g\in G\,\, \Rightarrow\,\, \Phi_{x*}X^R_v=-X_v,\,\, \forall v\in \mathfrak{g}.
$$
In particular,  $\Phi_{x*}X^R_\alpha=-X_\alpha$, with $\alpha=1,\ldots, r$, for a certain basis $\{X_1^R,\ldots,X_r^R\}$ of $V^R$. 
Due to the commutativity of the diagram (\ref{diag}), one obtains that
$
\pi_*(V^R)\simeq  V,
$
where $V^R$, as usually, stands for the Lie algebra of right-invariant vector fields on $G$. 

Our aim now is to obtain a Poisson bivector $\Lambda$ on $G$ that can be projected onto $G/G_x$, e.g. $\Lambda$ is a left-invariant bivector field that is invariant relative to the left-invariant vector fields tangent to $G_x$. This shows that $\pi_*\Lambda$ exists.  Then, $\pi_*\Lambda$ induces a Poisson bivector on $G/G_x$ that is invariant relative to $\pi_*(V^R)$. Hence, this allows us to define a Lie--Hamilton system on $G/G_x$ given by 
$X^\pi:=-\sum_{\alpha=1}^rb_\alpha (t)\pi_*X^R_\alpha$, which is diffeomorphic to the one, $X$, on $N$ via $\widehat{\Phi}_x$. If $\pi_*\Lambda$ does not vanish, then $X$ is a Lie--Hamilton system relative to a symplectic structure.

To illustrate the above method, let us analyse Lie--Hamilton systems related to P$_5$, I$_{16}$, P$_1$, and I$_8$.

$\bullet$ VG Lie algebra isomorphic $\mathfrak{sl}_2\ltimes \mathbb{R}^2$ (class P$_5$):

We consider the Lie group $SL_2\ltimes \mathbb{R}^2$ as the direct product of the matrix Lie group $SL_2$ of $2\times 2$ unimodular matrices with $\mathbb{R}^2$ relative to the multiplication
$$
\left(A,\vec{r}\right)\star\left(B,\vec{s}\right):=
\left(AB,A\vec{s}+\vec{r}\right),\qquad \forall A,B\in SL_2,\forall \vec{r},\vec{s}\in \mathbb{R}^2.
$$
Let us analyse now the automorphic Lie system on $SL_2\ltimes \mathbb{R}^2$  given by
$$
X^R(t,g):=-\sum_{\alpha=1}^5b_\alpha(t)X^R_\alpha(g),\qquad \forall g\in SL_2\ltimes \mathbb{R}^2,
$$
for a certain basis $X^R_1,\ldots,X_5^R$ of right-invariant vector fields on $SL_2\ltimes \mathbb{R}^2$. 
Let us use a local coordinate system $\{\alpha,\beta, \gamma,\sigma,\epsilon\}$ defined close to the neutral element of $SL_2\ltimes \mathbb{R}^2$ and related to the description of elements of $SL_2\ltimes \mathbb{R}^2$ in the form
$$
\left(\left[\begin{array}{cc}\alpha&\beta\\\gamma&\delta\end{array}\right], \left[\begin{array}{c}\sigma\\\epsilon\end{array}\right]\right),\qquad \left[\begin{array}{cc}\alpha&\beta\\\gamma&\delta\end{array}\right]\in SL_2,\quad \left[\begin{array}{c}\sigma\\\epsilon\end{array}\right]\in \mathbb{R}^2.
$$
In this coordinate system, a basis of left-invariant vector fields on $SL_2\times \mathbb{R}^2$ reads
$$
\begin{gathered}
X^L_1=\alpha\frac{\partial}{\partial \alpha}-\beta\frac{\partial}{\partial \beta}+\gamma\frac{\partial}{\partial \gamma},\qquad X_2^L=\alpha\frac{\partial}{\partial \beta},\qquad X_3^L=\beta\frac{\partial}{\partial \alpha}+\frac{1+\beta\gamma}{\alpha}\frac{\partial}{\partial \gamma},\\
X_4^L=\alpha\frac{\partial}{\partial \sigma}+\gamma\frac{\partial}{\partial \epsilon},\qquad X_5^L=\beta\frac{\partial}{\partial \sigma}+\frac{1+\beta\gamma}{\alpha}\frac{\partial}{\partial \epsilon}.
\end{gathered}
$$
They span indeed a VG Lie algebra isomorphic to the VG Lie algebras of the class P$_5$.
Meanwhile,
\begin{equation}\label{basis}
\begin{gathered}
X_1^R=\alpha\frac{\partial}{\partial \alpha}+\beta\frac{\partial}{\partial \beta}-\gamma\frac{\partial}{\partial \gamma}+\sigma\frac{\partial}{\partial \sigma}-\epsilon\frac{\partial}{\partial \epsilon},\quad X^R_2=\gamma\frac{\partial}{\partial \alpha}+\frac{1+\beta\gamma}{\alpha}\frac{\partial}{\partial \beta}+\epsilon\frac{\partial}{\partial \sigma},\\
X_3^R=\beta\frac{\partial}{\partial \alpha}+\frac{1+\beta\gamma}{\alpha}\frac{\partial}{\partial \gamma},\qquad
X_4^R=\frac{\partial}{\partial \sigma},\qquad X_5^R=\frac{\partial}{\partial \epsilon}.
\end{gathered}
\end{equation}
Since  $\Lambda:=X_4^L\wedge X_5^L$ is a exterior product of left-invariant vector fields, it is then invariant relative to the Lie derivatives with respect to the elements of $V^R$. 

If we consider the quotient space $(SL_2\ltimes \mathbb{R}^2)/SL_2$ of left-cosets of $SL_2$, we obtain the projection $\pi:SL_2\ltimes \mathbb{R}^2\rightarrow (SL_2\ltimes \mathbb{R}^2)/SL_2$ which satisfies that $\ker \pi_{*g}=\langle (X^L_1)_g,(X_2^L)_g,(X^L_3)_g\rangle$. Since $\Lambda$ and all the elements of $V^R$ are invariant with respect to the Lie derivatives relative to the elements of $\langle X^L_1,X_2^L,X_3^L\rangle$, one obtains that $V^R$ and $\Lambda$ are projectable onto $(SL_2\ltimes \mathbb{R}^2)/SL_2$. The projections of the elements of $V^R$ onto $(SL_2\ltimes \mathbb{R}^2)/SL_2$ are spanned by linear combination over the reals of the elements of the basis
$$
\begin{gathered}
X_1^\pi=\sigma\frac{\partial}{\partial \sigma}-\epsilon\frac{\partial}{\partial \epsilon},\qquad X^\pi_2=\epsilon\frac{\partial}{\partial \sigma},\qquad
X_3^\pi=\delta\frac{\partial}{\partial \gamma},\qquad
X_4^R=\frac{\partial}{\partial \sigma},\qquad X_5^R=\frac{\partial}{\partial \epsilon}.
\end{gathered}
$$
Therefore, the projection of the vector fields (\ref{basis}) onto $SL_2\ltimes \mathbb{R}^2$ span, as proved in our general theory, a VG Lie algebra isomorphic to $\mathfrak{sl}_2\ltimes \mathbb{R}^2$. Moreover, the projection of $X^R$ onto $(SL_2\ltimes \mathbb{R}^2)/SL_2$ takes the form
\begin{equation}\label{Proj}
X^\pi(s):=-\sum_{\alpha=1}^5b_\alpha(t)X_\alpha^\pi(s),\qquad \forall s\in (SL_2\ltimes \mathbb{R}^2)/SL_2,
\end{equation}
and $X^\pi$ admits a VG Lie algebra $V^\pi=\langle X_1^\pi,\ldots,X_5^\pi\rangle\simeq \mathfrak{sl}_2\ltimes \mathbb{R}^2$.
Moreover, one has that the projection, $\Lambda^\pi$, of $\Lambda$ onto $SL_2\ltimes \mathbb{R}^2$ reads
$$
\Lambda^\pi=\frac{\partial}{\partial \sigma}\wedge \frac{\partial}{\partial \epsilon},
$$
which becomes equivalent to a symplectic form. Since $\Lambda$ is invariant relative to the elements of $V^R$, it follows that $\Lambda^\pi$ is invariant with respect to the elements of $V^\pi$, and (\ref{Proj}) becomes a Lie--Hamilton system relative to $\Lambda^\pi$. 
Recall that the integration of the VG Lie algebra P$_5$ gives rise to a Lie group action $\Phi:(SL_2\ltimes \mathbb{R}^2)\times \mathbb{R}^2\rightarrow \mathbb{R}^2$. In view of the basis of P$_5$ in Table \ref{table1}, one sees that $SL_2$ can be considered as the isotropy group of $0\in \mathbb{R}^2$. Then, the 
mapping $\Phi_0:SL_2\times \mathbb{R}^2\rightarrow \mathbb{R}^2$ gives rise to a diffeomorphism $\widehat {\Phi}_0:(SL_2\ltimes \mathbb{R}^2)/SL_2\rightarrow \mathbb{R}^2$ that maps (\ref{Proj}) onto a Lie--Hamilton system on $\mathbb{R}^2$ of the form
$$
X=-\sum_{\alpha=1}^5b_\alpha(t)\widehat{\Phi}_{0*}X^\pi_\alpha,
$$
where $\{\widehat{\Phi}_{0*}X^\pi_\alpha,\ldots,\widehat{\Phi}_{0*}X^\pi_\alpha \}$ is a basis of P$_5$. 

$\bullet$ VG Lie algebra isomorphic to  $\mathfrak{h}_2\ltimes \mathbb{R}^{r+1}$ (class I$_{16}$):

The Lie algebra, $\mathfrak{h}_2\ltimes \mathbb{R}^{r+1}$, which is isomorphic to the VG Lie algebra I$_{16}$ for the fixed $r$, has an associated Lie group given by $H_2\ltimes \mathbb{R}^{r+1}$. This Lie group acts on $\mathbb{R}^2$ having a set of fundamental vector fields given in the class I$_{16}$ of Table \ref{table1}. The Lie algebra of vector fields vanishing at $0\in \mathbb{R}^3$ is given by $\langle x\partial_y-y\partial_x,x\partial_ y,\ldots, x^r\partial_y\rangle$. Then, the isotropy group at $0\in \mathbb{R}^2$ is $\mathbb{R}\ltimes \mathbb{R}^{r}$, and its corresponding Lie algebra is isomorphic to $\mathbb{R}\ltimes \mathbb{R}^{r}$.

Let us  choose a basis of $\mathfrak{h}_2\ltimes \mathbb{R}^{r+1}$ of the form $\{e_1,\ldots,e_{r+3}\}$ obeying the same commutation relations as the basis of I$_{16}$ in Table \ref{table1}. In other words, the chosen basis has  non-vanishing commutation relations
$$
[e_1,e_3]=e_1,\,\, [e_1,e_4]=e_2,\,\, [e_1,e_{4+i}]=(i+1)e_{3+i},\,\, [e_2,e_3]=-e_2,\,\, [e_3,e_{3+i}]=(1+r)e_{3+i},
$$
for $i=1,\ldots,r$. If $\{X^L_1,\ldots, X^L_{r+3}\}$ is a basis of left-invariant vector fields on $H_2\ltimes \mathbb{R}^{r+1}$ such that $X^L_\alpha(e)=e_\alpha$ for $\alpha=1,\ldots,r+3$, one can consider the bivector field on $H_2\ltimes \mathbb{R}^{r+1}$ of the form
$$
\Lambda:=X^L_1\wedge X^L_2.
$$
This bivector field is invariant relative to the Lie derivatives with right-invariant vector fields. Moreover, $\Lambda$ can be projected onto $(H_2\ltimes \mathbb{R}^{r+1})/(\mathbb{R}\ltimes \mathbb{R}^{r})$ together with the automorphic Lie system on $H_2\ltimes \mathbb{R}^{r+1}$ of the form
$$
X^R(t,g):=-\sum_{\alpha=1}^{r+3}b_\alpha(t)X^R_\alpha(g),\qquad \forall g\in H_2\times \mathbb{R}^{r+1}.
$$
If $\pi:H_2\ltimes \mathbb{R}^{r+1}\rightarrow (H_2\ltimes \mathbb{R}^{r+1})/(\mathbb{R}\ltimes\mathbb{R}^{r})$ is the quotient mapping, $\pi_*\Lambda$ is a Poisson bivector that is invariant relative to $X^\pi_\alpha:=\pi_*X^R_\alpha$, for $\alpha=1,\ldots,5$, and this leads to a Lie--Hamilton system on $(H_2\ltimes \mathbb{R}^{r+1})/(\mathbb{R}\ltimes\mathbb{R}^{r})$ given by 
$$
X^\pi:=-\sum_{\alpha=1}^{r+3}b_\alpha(t)X^\pi_\alpha.
$$
In view of our previous general comments, 
the VG Lie algebra of the projection is locally diffeomorphic to I$_{16}$, and it represents geometrically this class of Lie--Hamilton systems on $\mathbb{R}^2$.

$\bullet$ VG Lie algebras P$_1$ and I$_8$:

These cases follow exactly the same ideas given in previous examples, hence we will just sketch the procedure. In the case of the VG Lie algebra P$_1$, its vector fields can be integrated to define an action of the Lie group $\mathbb{R}\ltimes \mathbb{R}^2$ on $\mathbb{R}^2$ admitting a set of fundamental vector fields given by P$_1$. One then considers an automorphic Lie system $X^R=-\sum_{\alpha=1}^3b_\alpha(t)X^R_\alpha$ on the Lie group $\mathbb{R}\ltimes \mathbb{R}^2$, where $X^R_1,X^R_2,X^R_3$ are assumed to admit the same commutation relations as the basis $X_1,X_2,X_3$ of P$_1$ given in Table \ref{table1}. The isotropy group of the point $0\in\mathbb{R}^2$ is given by a Lie subgroup $H\simeq \mathbb{R}$ whose Lie algebra is spanned by $X^L_3$.

Then, $\Lambda:=X_1^L\wedge X_2^L$ is a Poisson bivector on the Lie group $\mathbb{R}\ltimes \mathbb{R}^2$ that is also invariant relative to the Lie derivatives with respect to the VG Lie algebra $\langle X_1^R,X_2^R,X_3^R\rangle$. Thus, $\Lambda$ and $X^R$ are invariant relative to $X^L_3$ and they can be therefore projected simultaneously onto $(\mathbb{R}\ltimes \mathbb{R}^2)/H$ via $\pi$, giving rise to a Lie--Hamilton system on the plane  relative to $\pi_*\Lambda$ with a VG Lie algebra locally diffeomorphic to P$_1$. In the case of I$_8$, the procedure is absolutely analogous.

Let us provide a slight generalisation of the above procedure that will allow us to retrieve the Lie--Hamilton systems on the plane admitting a VG Lie algebra locally diffeomorphic to any of the VG Lie algebras of the classes I$_{14A}$ and I$_{14B}$. 

$\bullet$ VG Lie algebras isomorphic to  $\mathbb{R}\ltimes \mathbb{R}^r$ (classes I$_{14A}$ and I$_{14B}$):

Any VG Lie algebra $V$ of the classes I$_{14A}$ or I$_{14B}$ can be integrated to give rise to a Lie group action of a Lie group $G_r:=\mathbb{R}\ltimes \mathbb{R}^r$ on $\mathbb{R}^2$ whose fundamental vector fields are given by $V$. The Lie algebra of the isotropy group of $0$ is given by an ${(r-1)}$-dimensional Lie subalgebra $V_r$ of $\langle X_2,\ldots,X_r\rangle$, where we can assume without loss of generality that  $X_2$ does not vanish at $0$ and therefore $X_2\notin V_r$. Thus, $V_r$ is isomorphic to the Abelian Lie algebra $\mathbb{R}^{r-1}$ with a Lie subgroup $H$. Then, $\mathbb{R}^2$ is diffeomorphic to the quotient space $G_r/H$ with the Abelian Lie group $H\simeq \mathbb{R}^{r-1}$. We write $V^L_r$ for the Lie algebra of the left-invariant vector fields of the isotropy group of $0$, which is isomorphic to $V_r$.

We define the automorphic Lie system on  $G_r$ of the form
\begin{equation}\label{sysI14}
X^R(t,g):=-\sum_{\alpha=1}^{r+1}b_\alpha(t)X^R_\alpha,\qquad \forall g\in G_r,
\end{equation}
where $X^R_1,\ldots,X_{r+1}^R$ are assumed to close the opposite  structure constants than the basis $X_1,\ldots,X_r$ of the VG Lie algebra on $\mathbb{R}^2$ under inspection. Let us also define $J:=X_1^L\wedge X_2^L$, which is not necessarily a Poisson bivector as\footnote{See \cite{Va94} for our convention on the Schouten-Nijenhuis bracket.} $[J,J]_{SN}=-2[X_1^L,X_2^L]\wedge X_1^L\wedge X_2^L$ and $[X_1^L,X_2^L]$, which may different for the VG Lie algebras of the classes I$_{14A}$ and I$_{14B}$. 

Let us try to project $X^R$ and $J$ onto the two dimensional quotient space $G_r/H$. Since $X^R$ is invariant relative to left-invariant vector fields, it will be projectable onto $G_r/H$ via the canonical projection $\pi$. Meanwhile, for every $X^L\in V^L_r$, one gets that  $\mathcal{L}_{X^L}J=\mathcal{L}_{X^L}X^L_1\wedge X_2^L=[X^L,X_1^L]\wedge X_2^L$. Since $[X^L,X_1^L]=cX_2^L+X^L_r$ for a certain constant $c$ and an element $X^L_r\in V_r^L$, one obtains that $\mathcal{L}_{X^L}J= X^L_r\wedge X_2^L$, which projects onto zero in $G_r/H$. By Lemma \ref{lemma3}, the $J$ is projectable onto $G_r/H$. Moreover, $[J,J]_{SN}$ is such that $\pi_*[J,J]_{SN}=0$ and $\pi_*J$ is a Poisson bivector. Hence, one obtains that the projection of $X^R$ onto $G_r/H$ is a Lie--Hamilton system relative to $\pi_*\Lambda$ that is locally diffeomorphic to a Lie--Hamilton system on the plane with a VG Lie algebra locally diffeomorphic to $V$.

It is now very simple to prove that the above generalisation allows us to recover the Lie--Hamilton systems on the plane related to the VG Lie algebras P$_2$, P$_3$, I$_4$, and I$_5$.  

It seems to us that the previous procedure could be extended to study Lie--Hamilton systems on higher-dimensional manifolds. The idea is to construct from a Lie--Hamilton system $X$ on a manifold $N$ with a locally transitive VG Lie algebra $V$ of Hamiltonian vector fields an automorphic Lie system on a certain Lie group $G$ with a VG Lie algebra of right-invariant vector fields isomorphic to $V$. Then, a left-invariant bivector field $\Lambda$ must become projectable onto the quotient space with an isotropy group, which can be tested by using the algebraic properties of $\mathfrak{g}$ and/or using $\mathcal{T}^L(G)$ as previously. 

\section{Conclusions and outlook}
We have developed two natural geometric models for the study of Lie--Hamilton systems on the plane that, quite probably, can also be used to study higher-dimensional Lie--Hamilton systems. We plan to study such an extension in the future. 

On the other hand, it is natural to wonder whether the given approaches to Lie--Hamilton systems are useful to study their superposition rules. Recall that superposition rules for certain Lie--Hamilton systems were obtained  through Casimir functions (see \cite{LS19} and references therein), while Winternitz and coworkers derived superposition rules for classes of Lie systems  by considering them as projections of  automorphic Lie systems (see \cite{Dissertationes,LS19,SW84,SW84II}). Hence, it seems to us that our techniques, where such structures appear naturally, can be useful to study the existence and derivation of superposition rules. We aim to study this topic in further works.

\vspace{6pt} 

\section*{Acknowledgements}

J. de Lucas acknowledges support from the contract 623 of the Faculty of Physics of the University of Warsaw. 
This work is circumscribed within the research topics carried out in the project HARMONIA 2016/22/M/ST1/00542 financed by the Polish National Science Center (POLAND). We would like to thank the anonymous referees for the careful reading of the work, raising interesting questions that led to improve the paper, proposing interesting further research tasks, and pointing out the references \cite{Ma01,Um89}.


\begin{thebibliography}{999}


	\bibitem{AM87}
		Abraham, R. and Marsden, J.E.  {\em Foundations of mechanics}, Addison-Wesley Publishing Company Inc., Redwood City, 1987.
		



		\bibitem{BBHLS15}
		Ballesteros, A.; Blasco, A.; Herranz, F.J.; de Lucas, J.  and Sard\'on, C.   Lie--Hamilton systems on the plane: properties, classification and applications, {\em J. Differential Equations} {\bf 2015}, {\it 258}, 2873--2907.
		
		\bibitem{BCFHL17}
		Ballesteros, A.;  Campoamor-Stursberg, R.;  Fern\'andez-Saiz, E.; Herranz,  F.J. and de Lucas, J.   Poisson--Hopf algebra deformations of Lie--Hamilton systems,
		{\em J. Phys. A} {\bf 2017}, {\it 51},  065202.
		
		\bibitem{BCFHL18}
		Ballesteros, A.;  Campoamor-Stursberg, R.; Fern\'andez-Saiz, E.; Herranz, F.J. and de Lucas, J. 
	A unified approach to Poisson--Hopf
deformations of Lie–Hamilton systems
based on $\mathfrak{sl}(2)$ In The Book {\em Quantum Theory and Symmetries with Lie Theory and Its Applications in Physics} Vol. 1, Springer: Singapore, 2018; pp. 347--366.

		
		\bibitem{BCHLS13}
		Ballesteros, A.;  Cari\~nena, J.F.; Herranz, F.; de~Lucas, J.  and Sard\'on, C. 
	From constants of motion to superposition rules for Lie--Hamilton
			systems,
		{\em J. Phys. A} {\bf 2013}, {\it 46}, 285203.
		
	
		
		\bibitem{BHLS15}
		Blasco, A.; Herranz, F.J.; de Lucas, J.  and Sard\'on, C. 
		{ Lie-Hamilton systems on the plane: Applications and superposition rules},
		{\em J. Phys. A} {\bf 2015}, {\it 48}, 345202.
		
	

		\bibitem{CGM00}
		Cari{\~n}ena, J.F.; Grabowski,  J. and Marmo, G. 
		{\sl {L}ie--{S}cheffers systems: a geometric approach},
		\newblock Bibliopolis, Naples, 2000.
		
		\bibitem{CGM07}
		Cari{\~n}ena, J.F.;  Grabowski, J. and Marmo, G. 
		\newblock { Superposition rules, Lie theorem and partial differential equations},
		\newblock {\em Rep. Math. Phys.} {\bf 2007}, {\it 60}, 237--258.
		

		
		\bibitem{Dissertationes}
		Cari\~nena, J.F.  and de Lucas, J. 
		{Lie systems: theory, generalizations, and applications},
		{\em Diss. Math. (Rozprawy. Math.)} {\bf 2011}, {\it 479}, 1--162.
		
		
		\bibitem{CLS13}
		Cari{\~n}ena, J.F.; {de Lucas}, J. and Sard\'on, C. 
		\newblock {Lie--Hamilton systems: theory and applications},
		\newblock {\em Int. J. Geom. Methods Mod. Phys.} {\bf 2013}, {\it 10}, 1350047.
		\bibitem{FMSUW65}
	 Fri\v{s}, J.; Mandrosov, V.;  Smorodinsky, Y.A.; Uhlir, M.  and  Winternitz, P., {\ On higher order symmetries in quantum
mechanics}, {\em Phys. Lett.} {\bf 1965},  {\it 16}, 354--356.


		
		\bibitem{GKO92}
		Gonz\'alez-L\'opez, A.;  Kamran, N. and Olver, P.J.  
		{ Lie algebras of vector fields in the real plane},
		{\em Proc. London Math. Soc.}  {\bf 1992}, {\it 64}, 339--368.
		
		\bibitem{GLMV19}
		Gr\`acia, X.; de Lucas, J.; Mu\~noz Lecanda, M.C.  and Vilari\~no,  S.
		{\ Multisymplectic structures and invariant tensors for Lie systems}, {\em J. Phys. A} {\bf 2019}, {\it 52}, 215201.
		

		\bibitem{HA75}
		Hermann, R. and Ackerman, M.  {\sl Sophus Lie's 1880 transformation group paper}, Math. Sci. Press, Brookline, 1975.
		
	
		


		\bibitem{LS}
		Lie, S. and Scheffers, G. 
		\newblock {\sl Vorlesungen \"uber continuierliche Gruppen mit geometrischen und
			anderen Anwendungen},
		\newblock Teubner, Leipzig, 1893.
		

		\bibitem{LL18}	
de Lucas J. and Lewandowski, M.M. 
{\it Geometric features of Vessiot--Guldberg Lie algebras of conformal and Killing vector fields on $\mathbb{R}^2$}, Banach Center Publ. {\bf 2018}, {\it 113}, 243--262.

		
			\bibitem{LS19}
		 de Lucas J. and Sard\'on, C.
		{\em A Guide to Lie Systems with Compatible Structures}, World Scientific, Singapure, 2019. 
		
		\bibitem{Ma01}
Malgrange, B.
{\it The Galois groupoid of a foliation}, In The Book {\em Essays on geometry and related topics}, Monogr. Enseign. Math. {\bf 38}, Enseignement Math., Geneva, 2001; pp. 465–501.

		\bibitem{Pa57}
		Palais, R.S.  { Global formulation of the Lie theory of transformation groups}, {\it Mem. Amer. Math. Soc.}  {\bf 1957}, {\it 22}, 1--123.

\bibitem{PSWZ76}
Patera, J.; Sharp, R.T.; Winternitz, P. and Zassenhaus, H. {\it Invariance of real
low-dimensional Lie algebras}, J. Math. Phys. {\bf 17}, 986--994, 1976.

\bibitem{SW84}
Shnider, S. and Winternitz, P. {\it Classification of systems of nonlinear ordinary differential equations with superposition principles}, J. Math. Phys. {\bf 25}, 3155--3165 (1984). 

\bibitem{SW84II}
Shnider, S. and Winternitz, P. {\it Nonlinear equations with superposition principles and the theory of transitive primitive Lie algebras}, Lett. Math. Phys. {\bf 8}, 69--78 (1984).

				\bibitem{SW14}
\v{S}nobl, L. and Winternitz, P. {\sl
		Classification and Identification of Lie Algebras}, CRM  Monograph Series {\bf 33}, AMS, Providence, 2014.
		
		\bibitem{Um89}
Umemura, H. 
{\it On the irreducibility of Painlevé differential equations}, Sugaku Expositions {\bf 2}, 231--252  (1989).


		
		\bibitem{Va94}
		Vaisman, I.
		{\em Lectures on the Geometry of Poisson Manifolds},
		Progress in Mathematics {\bf 118}, Birkh\"auser Verlag, Basel, 1994. 

\bibitem{Va84}
Varadarajan, V.S. {\em Lie groups, Lie algebras, and their representations}. Graduate Texts in
Mathematics 102, Springer–Verlag, New York, 1984.

		
		\bibitem{Ve93}
		Vessiot, M.E.
		\newblock {\ Sur une classe de d'{\'e}quations diff{\'e}rentielles},
		\newblock {\em Ann. Sci. {\'E}cole Norm. Sup} {\bf 1893}, {\it 10}, 53--64.
		
		\bibitem{Ve99}
		Vessiot, M.E. 
	{Sur les {\'e}quations diff{\'e}rentielles ordinaires du premier ordre
			qui ont des syst\`{e}mes fondamentaux d'int{\'e}grales},
		\newblock {\em Ann. Fac. Sci. Toulousse} {\bf 1899}, {\it 1}, 1--33.
		
		
		\bibitem{PW}
		Winternitz, P.
		\newblock {Lie groups and solutions of nonlinear differential equations},
		\newblock {\em Lect. Not. Phys.} {\bf 1983}, {\it 189}, 263--305.



\end{thebibliography}
\end{document}